\newtheorem{theorem}{Theorem}[section]
\newtheorem{conjecture}[theorem]{Conjecture}
  \newtheorem{observation}[theorem]{Observation}
\newtheorem{lemma}[theorem]{Lemma}
\newcommand{\prooff}[1]{\textit{Proof of Theorem #1.}}
\begin{document}
\thispagestyle{empty}
\begin{center}
\Large
 {\bf Vertex removal in biclique graphs}
\vspace*{1cm}

\large
Leandro Montero\\
\vspace*{0.5cm}
\normalsize
KLaIM team, L@bisen, Yncrea Ouest\\33 Q, Chemin du Champ de Man\oe{}uvres\\44470 Carquefou, France\\
  \texttt{lpmontero@gmail.com} \\
\vspace*{.5cm}

ABSTRACT
\end{center}

\small A \textit{biclique} is a maximal induced complete bipartite subgraph.
The \textit{biclique graph} of a graph $H$, denoted by $KB(H)$, is the intersection graph of the family of all bicliques of $H$.
In this work we address the following question: Given a biclique graph $G=KB(H)$, is it possible to remove a vertex $q$ of $G$, such that
$G - \{q\}$ is a biclique graph? And if possible, can we obtain a graph $H'$ such that $G - \{q\} = KB(H')$? We show that the general
question has a ``no'' for answer. However, we prove that if $G$ has a vertex $q$ such that $d(q) = 2$, then $G-\{q\}$ is a biclique graph and we
show how to obtain $H'$.

\normalsize
\vspace*{1cm}
{\bf Keywords:} Bicliques; Biclique graphs; Intersection graphs; Vertex removal

\section{Introduction}

Given a family of sets $\mathcal{H}$, the \textit{intersection graph} of $\mathcal{H}$ is a graph that has the members of 
$\mathcal{H}$ as vertices and there is an edge between two sets $E,F\in\mathcal{H}$ when $E$ and $F$ have non-empty intersection.
A graph $G$ is an \textit{intersection graph} of a specific type of objects if there exists a family of sets $\mathcal{H}$ of these objects such that $G$ is the intersection graph of $\mathcal{H}$. We remark that every graph is an intersection graph of appropriately defined objects~\cite{Szpilrajn-MarczewskiFM1945}. Intersection graphs of certain special subgraphs of a general graph have been studied 
extensively. 

A \textit{clique} of $G$ is a maximal complete subgraph.
The \textit{clique graph} of $G$, denoted by $K(G)$, is the intersection graph of the family of all cliques of $G$.
Clique graphs were introduced in~\cite{HamelinkJCT1968} and characterized in~\cite{RobertsSpencerJCTSB1971}. It was proved in~\cite{Alc'onFariaFigueiredoGutierrez2006} 
that the clique graph recognition problem is $NP-$complete.

A \textit{biclique} of $G$ is a maximal induced complete bipartite subgraph. 
The \textit{biclique graph} of $G$, denoted by $KB(G)$, is the intersection graph of the family of all bicliques of $G$.
It was defined and characterized in~\cite{GroshausSzwarcfiterJGT2010}. However, no polynomial time algorithm is known for recognizing biclique graphs.
Refer to~\cite{puppo2,puppo} for some results when restricted to particular graph classes.

In this work we study the following question: Given a biclique graph $G=KB(H)$, is it possible to remove a vertex $q$ of $G$, such that
$G - \{q\}$ is a biclique graph? And if possible, can we obtain a graph $H'$ such that $G - \{q\} = KB(H')$? We show that in general, the answer
of this question is ``no''. However, when the biclique graph has vertices of degree two, we answer ``yes'' to this question. In particular,
our main theorem ensures that if a biclique graph $G=KB(H)$ has a vertex $q$ of degree two, then $G - \{q\}$ is a biclique graph and furthermore, we show how to
obtain a graph $H'$ such that $G - \{q\} = KB(H')$.

This work is motivated by the open computational complexity of the recognition problem (that we suspect to be $NP-$complete), therefore we try to understand more the structure of biclique graphs.
In particular, our result helps us, as well, to know about how a graph $H$ should be, given its biclique graph $G=KB(H)$. 
Moreover, it is also the first result allowing to maintain the property of being a biclique graph after removing a vertex (as far as we know, there are no results of this kind 
even for the more studied subject of clique graphs). Since our result might be generalized (not easily) to higher degrees 
of vertices, one could use these kind of results (along with the proposed conjectures of last section) to restrict the input graph to a more ``controlled'' structure in order to solve 
the recognition problem of biclique graphs. We hope that these tools give some light to this matter. 

Also, as a direct application of our main result, it can help to decide if a graph $G$ is not a biclique graph since we can remove iteratively (if possible) vertices of degree two and if we 
obtain a graph that we know it is not a biclique graph, then $G$ and all intermediate graphs are not biclique graphs. 

This work is organized as follows. In Section $2$ the notation is given. In Section $3$ we discuss the main question of this work and we show why, in general, the answer is ``no'', while in Section $4$ we give a positive answer when biclique graphs have vertices of degree two. In the last section we present some open related problems.

\section{Preliminaries}

Along the paper we restrict to undirected simple graphs. Let $G=(V,E)$ be a graph with vertex set $V(G)$ and edge set $E(G)$, and 
let $n=|V(G)|$ and $m=|E(G)|$. We will use the notation $G' \subseteq G$ whenever $G'$ is an induced subgraph of $G$.
Say that $G$ is a 
\textit{complete graph} when every possible edge belongs to $E$ and say that $G$ is \textit{complete bipartite} when every possible edge between each part of the partition belongs to $E$.
A complete graph of $n$ vertices is denoted $K_{n}$ and a complete bipartite graph on $n$ and $n'$ vertices in each part of the partition respectively, is denoted $K_{n,n'}$.
A \textit{clique} of $G$ is a maximal complete subgraph, while a \textit{biclique} is a maximal induced complete bipartite subgraph 
of $G$. 
We will denote bicliques as $B = B_1 \cup B_2$ where $B_1$ and $B_2$ are the two parts of the biclique.
A set of vertices $I$ is an \textit{independent set} when there is no edge between any pair of them.
The \textit{open neighborhood} of a vertex $v \in V(G)$, denoted by $N(v)$, is the set of vertices adjacent to $v$. 
The \textit{closed neighborhood} of a vertex $v \in V(G)$, denoted by $N[v]$, is the set $N(v) \cup \{v\}$.
The \textit{degree} of a vertex $v$, denoted by $d(v)$, is defined as $d(v) = |N(v)|$. 
Two vertices $v,w \in V(G)$ are called \textit{false-twins} if $N(v)=N(w)$. A graph is \textit{false-twin-free} if it does not contain any false-twins vertices.
Consider all maximal sets of false-twin vertices $Z_1,\ldots,Z_k$ and let $\{z_1,z_2,\ldots,z_k\}$ be a set of representative vertices such that
$z_i \in Z_i$. We call $Tw(G)$ to the graph obtained by the deletion of all vertices of $Z_i - \{z_i\}$, for $i = 1\ldots k$ (definition from~\cite{marinayo}). 
Observe that $Tw(G)$ has no false-twin vertices. A vertex $v\in V(G)$ is \textit{universal} if it is adjacent to all other vertices in $V(G)$.
A \textit{path} of $k$ vertices is denoted by $P_{k}$ and a \textit{cycle} of $k$ vertices is denoted by $C_k$.
We assume that all the graphs of this paper are connected.

A \textit{diamond} is a complete graph with $4$ vertices minus an edge. A \textit{gem} is an induced path with $4$ vertices plus an 
universal vertex.


\section{Removing vertices in biclique graphs}\label{sectionblabla}

We recall the theorem in~\cite{GroshausSzwarcfiterJGT2010} that gives a necessary condition for a graph to be a biclique graph.
The contrapositive of this theorem is useful to prove that a graph is not a biclique graph.

\begin{theorem}[\cite{GroshausSzwarcfiterJGT2010}]\label{tMarina}
Let $G$ be a biclique graph. Then every induced $P_3$ of $G$ is contained in an induced \textit{diamond} or an induced \textit{gem} of $G$ as shown in Figure~\ref{FigtMarina}.
\end{theorem}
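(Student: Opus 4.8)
The plan is to work directly from $G=KB(H)$. Let $B_1-B_2-B_3$ be an induced $P_3$ of $G$, so that $B_1,B_2,B_3$ are bicliques of $H$ with $B_1\cap B_2\neq\emptyset$, $B_2\cap B_3\neq\emptyset$ and $B_1\cap B_3=\emptyset$. Fix $x\in B_1\cap B_2$ and $y\in B_2\cap B_3$. From $B_1\cap B_3=\emptyset$ we immediately get $x\neq y$, $x\notin B_3$ and $y\notin B_1$, so any biclique of $H$ containing both $x$ and $y$ is automatically distinct from $B_1$ and $B_3$. Throughout I will use the elementary fact that every induced complete bipartite subgraph of $H$ extends to a biclique, and that distinct bicliques are incomparable under inclusion.

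The main observation is that it suffices to find \emph{either} one further biclique $B_4\notin\{B_1,B_2,B_3\}$ meeting all of $B_1,B_2,B_3$ (then $\{B_1,B_2,B_3,B_4\}$ induces exactly $K_4$ minus the edge $B_1B_3$, i.e.\ a diamond containing the $P_3$) \emph{or}, when this is impossible, two further bicliques $B_4,B_5$ such that $B_1-B_4-B_5-B_3$ is an induced $P_4$ with $B_4$ and $B_5$ both meeting $B_2$; then $\{B_1,B_2,B_3,B_4,B_5\}$ induces a gem with $B_2$ as its universal vertex. A short check shows that if the $P_3$ lies in a gem of $G$ but in no diamond, then $B_2$ must be the universal vertex of that gem and $B_1,B_3$ the ends of its spine, so this is the only gem configuration we ever need to produce.

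To look for $B_4$, I would branch on the position of $x$ and $y$ in the bipartition $B_2=X\cup Y$. If $x$ and $y$ lie in different parts, then $xy\in E(H)$; take any biclique $\widehat B$ containing the edge $xy$, which then meets $B_1$ at $x$, meets $B_3$ at $y$, and meets $B_2$ at both. If $x$ and $y$ lie in the same part, then $xy\notin E(H)$ and they have a common neighbour $v\in B_2$; take a biclique $\widehat B$ containing the induced path $x-v-y$, which again meets $B_1,B_2,B_3$. In both cases, if $\widehat B\neq B_2$ we are done with a diamond.

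The hard part is the residual case in which, for \emph{every} admissible choice of $x,y$ (and of $v$), the biclique $B_2$ is the unique one witnessing these overlaps. Here I would exploit maximality: since $B_1\not\subseteq B_2$ and $B_3\not\subseteq B_2$, pick $a\in B_1\setminus B_2$ and $b\in B_3\setminus B_2$ and analyse how $a$ and $b$ attach to $x$, $y$ and the two parts of $B_2$. The aim is to build an induced path from a vertex of $B_1$ to a vertex of $B_3$ whose two initial and final segments extend to bicliques $B_4$ and $B_5$ with $B_4\cap B_3=\emptyset$, $B_5\cap B_1=\emptyset$, $B_4\cap B_5\neq\emptyset$, and $B_4\cap B_2\neq\emptyset\neq B_5\cap B_2$, which yields the required gem. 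I expect the bookkeeping that certifies the induced subgraph on the four (resp.\ five) chosen bicliques is \emph{exactly} a diamond (resp.\ gem), together with the verification that these bicliques genuinely exist and are pairwise distinct in this forced‑$B_2$ situation, to be the bulk of the work: no single step is difficult, but the case analysis driven by the part‑structure of $B_2$ and the maximality of $B_1$ and $B_3$ is where all the effort lies.
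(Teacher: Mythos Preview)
The paper does not prove Theorem~\ref{tMarina}: it is quoted from~\cite{GroshausSzwarcfiterJGT2010} and used only as a black box (its contrapositive is invoked in Observation~\ref{obsLean} and in the example of Figure~\ref{grado2}), so there is no in-paper argument to compare your proposal against.

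As for the sketch itself, the diamond half is essentially complete and correct: producing a fourth biclique $\widehat B$ through $x$ and $y$ (or through $x,v,y$) and checking $\widehat B\notin\{B_1,B_2,B_3\}$ really does yield an induced diamond on $\{B_1,B_2,B_3,\widehat B\}$, since the only missing edge is $B_1B_3$. The gem half, however, is only a plan, as you yourself acknowledge. You correctly identify that in the residual situation $B_2$ must be the universal vertex and $B_1,B_3$ the ends of the spine, but you have not actually exhibited $B_4$ and $B_5$, nor verified the three required non-intersections $B_1\cap B_5=\emptyset$, $B_3\cap B_4=\emptyset$, and $B_4\neq B_5$; this is exactly where the substance of the original Groshaus--Szwarcfiter argument lives. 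In particular, choosing $a\in B_1\setminus B_2$ and $b\in B_3\setminus B_2$ and ``building an induced path'' between them is delicate, because bicliques extending segments of such a path may re-intersect $B_1$ or $B_3$ in unforeseen ways, and ruling this out genuinely uses the hypothesis that \emph{every} choice of $x,y$ (and $v$) forces $\widehat B=B_2$. Until that case analysis is carried out and the five bicliques are shown to be pairwise distinct with exactly the gem adjacency pattern, the proposal is a plausible outline rather than a proof.
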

\begin{figure}[ht!]
	\centering
	\includegraphics[scale=.4]{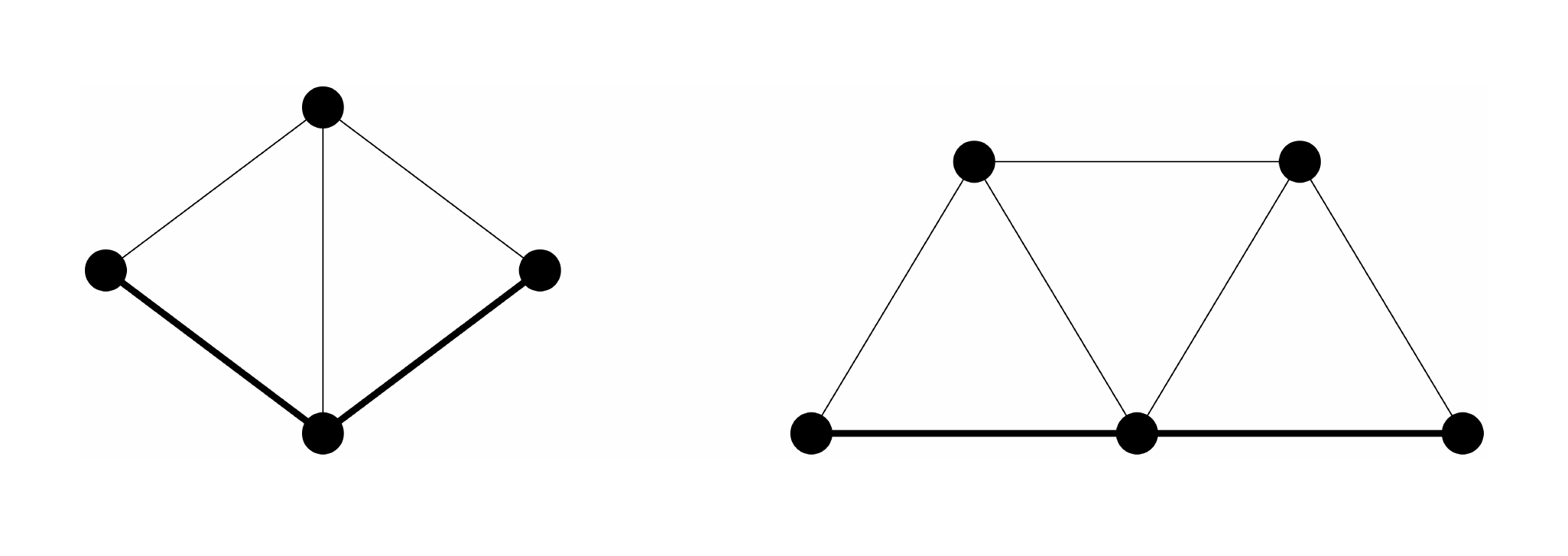}
	\caption{Induced $P_3$ in bold edges contained in a \textit{diamond} and in a \textit{gem} respectively.}
	\label{FigtMarina}
\end{figure}

To answer the question stated in the introduction, consider the  following observation.

\begin{observation}\label{obsLean}
The graph $G=KB(C_k)$, for $k \geq 7$, has no vertex such that, after its removal, the resulting graph is still a biclique graph.
\end{observation}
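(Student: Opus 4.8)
The idea is to compute $KB(C_k)$ explicitly for $k\ge 7$ and then show it fails the necessary condition of Theorem~\ref{tMarina} no matter which vertex we delete. First I would determine the bicliques of $C_k$: since a cycle has no induced $K_{2,2}$ (that would require four vertices with the right adjacencies, impossible on a cycle of length $\ge 7$), every biclique of $C_k$ is a maximal induced complete bipartite subgraph that is in fact a $P_3$ (a path on three vertices, which is $K_{1,2}$) — for each vertex $v_i$ of the cycle, the path $v_{i-1}v_i v_{i+1}$ is an induced $K_{1,2}$, and it is maximal because adding any further vertex of the cycle destroys either the bipartiteness or the induced-subgraph property. Hence $C_k$ has exactly $k$ bicliques $B_i = \{v_{i-1},v_i,v_{i+1}\}$, indices mod $k$. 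Two such bicliques $B_i$ and $B_j$ intersect iff $|i-j|\le 2$ (mod $k$), so $KB(C_k)$ is the $k$-th power-like graph $C_k^2$: the cycle $v_1\ldots v_k$ with, in addition, all "distance-two" chords. I would verify that $C_k^2$ is a $4$-regular graph on $k$ vertices for $k\ge 5$.

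**Key steps.** Next, I would analyze the local structure of $G = C_k^2$. Each vertex $q$ lies in exactly two triangles that share only $q$ (namely $\{q, q\text{-left}, q\text{-left-left}\}$-type triples coming from consecutive bicliques), and crucially, for $k\ge 7$ the two "far" neighbors of $q$ — the vertices at cyclic distance $2$ — are non-adjacent to each other and have a controlled common-neighborhood with $q$. The plan is to exhibit, in $G-\{q\}$, an induced $P_3$ that is contained in no induced diamond and no induced gem, contradicting Theorem~\ref{tMarina}; since this must hold for every choice of $q$, by symmetry (the automorphism group of $C_k^2$ is vertex-transitive, indeed dihedral) it suffices to do this for one vertex, say $q=v_k$. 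After deleting $v_k$, consider the path $P:\ v_{k-2}\,v_1\,v_3$ — wait, more carefully: I would pick a $P_3$ centered at a vertex that was adjacent to $q$, so that deleting $q$ has stripped away precisely the vertex that used to complete every diamond/gem through that $P_3$. Concretely, in $G$ the $P_3$ given by $v_{k-1}\,v_1\,v_2$ (using the chord $v_{k-1}v_1$ and the edges $v_1v_2$; note $v_{k-1}\not\sim v_2$ since their cyclic distance is $3\le k-3$) is completed to a diamond only by $v_k$ (the unique common neighbor of $v_{k-1}$ and $v_1$ on "that side") and to a gem only using $v_k$ as well; after removing $v_k$ this $P_3$ survives but has lost all its diamond/gem extensions.

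**The main obstacle.** The delicate part is the bookkeeping: I must enumerate all common neighbors of the two endpoints of the chosen $P_3$ and all induced $P_4$'s through it, and check that every diamond and every gem containing the $P_3$ really does use the deleted vertex $q$ — this requires using $k\ge 7$ in an essential way (for small $k$ the extra chords create additional short cycles and the $P_3$ acquires extra completions, which is exactly why the bound is $k\ge 7$). I expect this finite neighborhood case analysis, carried out once by vertex-transitivity, to be the bulk of the work; the rest (computing the bicliques of $C_k$ and identifying $KB(C_k)=C_k^2$) is routine. A clean way to package it is: show that in $G-\{q\}$ the endpoints $v_{k-1}, v_2$ of our $P_3$ have $v_1$ as their unique common neighbor (diamond ruled out) and that $v_1$ together with $v_{k-1},v_2$ extends to no induced $P_4$-plus-universal-vertex inside $G-\{q\}$ (gem ruled out), then invoke the contrapositive of Theorem~\ref{tMarina}.
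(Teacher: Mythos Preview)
Your approach is essentially the paper's: exhibit, in $KB(C_k)-\{q\}$, an induced $P_3$ lying in no induced diamond or gem, and invoke the contrapositive of Theorem~\ref{tMarina}. The paper states this in one sentence without computing $KB(C_k)$; you carry out the computation $KB(C_k)=C_k^{2}$ and the neighbourhood analysis explicitly, which is fine and arguably more informative.

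One small correction to your final ``clean packaging''. The claim that in $G-\{v_k\}$ the endpoints $v_{k-1}$ and $v_2$ have $v_1$ as their \emph{unique} common neighbour is false for $k=7$: there $v_{k-3}=v_4$ is also adjacent to both $v_{k-1}=v_6$ and $v_2$. This does not break the argument, because $v_4$ is not adjacent to $v_1$, so $\{v_6,v_1,v_2,v_4\}$ induces a $4$-cycle rather than a diamond; but the correct check for the diamond is ``no vertex of $G-\{v_k\}$ is adjacent to all three of $v_{k-1},v_1,v_2$'' (and indeed $v_k$ is the only such vertex in $G$, for every $k\ge 7$), not ``no second common neighbour of the two endpoints''. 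With that adjustment the bookkeeping goes through exactly as you outline: after deleting $v_k$, the centre $v_1$ has only $v_3$ left as a neighbour besides $v_{k-1}$ and $v_2$, which kills the gem case, and no vertex is adjacent to all three, which kills the diamond case.
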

\begin{proof}
After removing any vertex of $KB(C_k)$, we obtain an induced $P_3$
that is not contained in any induced \textit{diamond} or \textit{gem}, thus for every vertex $q \in KB(C_k)$, $KB(C_k)-\{q\}$ is not a biclique graph by Theorem~\ref{tMarina}. 
\end{proof}

We can conclude that the answer to the question in general is ``no''. See Figure~\ref{c7kb} for an example. 

\begin{figure}[ht]
  \centering
  \includegraphics[scale=.3]{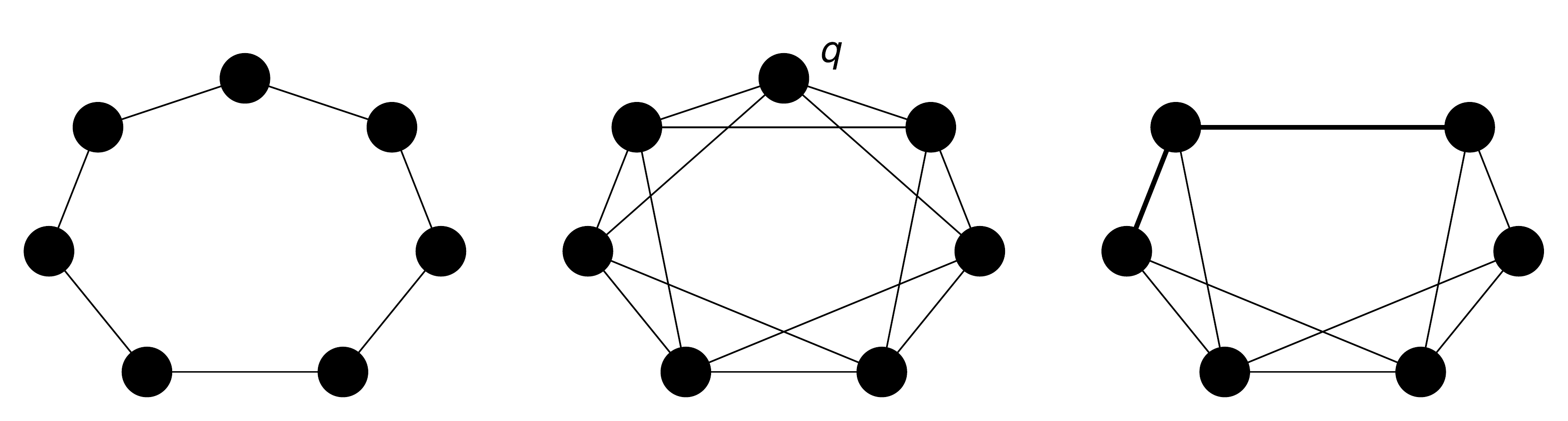}
  \caption{$C_7$, $KB(C_7)$ and $KB(C_7)-\{q\}$. In bold edges we show a $P_3$ not contained in any induced \textit{diamond} or \textit{gem}, therefore $KB(C_7)-\{q\}$ is not a biclique graph.}
  \label{c7kb}
\end{figure}

However, we can answer ``yes'' to the question in a particular case. We will show in the next section that biclique graphs with vertices of degree two, are still biclique graphs after removing them.
Moreover, we show how to obtain a graph $H'$ such that $G - \{q\} = KB(H')$ for a vertex $q$ of degree two.

\section{Biclique graphs with vertices of degree two}\label{sectiongrado2}

The main theorem of this article is the following:

\begin{theorem}\label{teoGrado2}
Let $G=KB(H)$ for some graph $H$ and let $q$ be a vertex of $G$ such that $d(q)=2$.
Then $G-\{q\}$ is a biclique graph. In particular, we can construct a graph $H'$ such that $G-\{q\} = KB(H')$.
\end{theorem}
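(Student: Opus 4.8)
The plan is to analyze the local structure of $H$ around the biclique $B_q$ corresponding to the degree-two vertex $q$, and then to surgically modify $H$ so as to destroy exactly the biclique $B_q$ while preserving all other bicliques and their intersection pattern. Write $N(q) = \{a,b\}$ in $G = KB(H)$; then $B_q$ is a biclique of $H$ that shares a vertex with exactly two other bicliques $B_a$ and $B_b$. The first step is to collect what $d(q)=2$ forces about $B_q$: every vertex of $B_q$ lies in $B_q \cap B_a$, in $B_q \cap B_b$, or in no other biclique at all, since a vertex in three bicliques would give $q$ a third neighbor. I would distinguish cases according to whether $a$ and $b$ are adjacent in $G$ (i.e.\ whether $B_a \cap B_b \neq \emptyset$) and according to how the ``private'' part of $B_q$ (vertices in no other biclique) is distributed between the two sides $B_{q,1}$ and $B_{q,2}$ of the bipartition $B_q = B_{q,1} \cup B_{q,2}$.

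The core construction is to delete from $H$ the vertices of $B_q$ that belong to no other biclique, and possibly to add a small gadget (a few new vertices and edges, e.g.\ a pendant edge or a pendant $P_2$ attached to a surviving vertex of $B_q \cap B_a$ or $B_q \cap B_b$) so that: (i) $B_q$ is no longer a maximal induced complete bipartite subgraph — either because it has been shrunk and is now properly contained in a larger complete bipartite subgraph, or because one side has been emptied — (ii) $B_a$ and $B_b$ survive unchanged as bicliques, keeping their mutual intersection, and (iii) no new biclique is created and no other biclique is altered. The new graph $H'$ should then satisfy $KB(H') = G - \{q\}$: the vertices of $KB(H')$ are the bicliques of $H$ other than $B_q$ (plus possibly trivial bicliques coming from the gadget, which one must arrange to be already-existing vertices or to be avoided), and two such intersect in $H'$ iff they did in $H$. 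I would verify this by checking, biclique by biclique, that the family of bicliques of $H'$ is exactly $\{B : B \text{ biclique of } H,\ B \neq B_q\}$.

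The main obstacle I anticipate is two-fold. First, ruling out \emph{new} bicliques and unwanted merging: when one removes private vertices of $B_q$, the remaining vertices of $B_q \cap B_a$ and $B_q \cap B_b$ may now be extendable in $H'$ into a single large complete bipartite subgraph that was previously blocked precisely by the presence of the deleted vertices, and this could either be harmless (it becomes part of $B_a$ or $B_b$) or could create a genuinely new biclique vertex in $KB(H')$ — the gadget must be designed to block that. Second, the case analysis genuinely branches: $B_q$ could be a single edge ($K_{1,1}$), could have empty intersection with $B_a$ on one side, $B_a$ and $B_b$ may or may not meet, and when they do meet the shared vertices constrain what gadget is legal. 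I would handle the ``generic'' case first (both $B_a \cap B_q$ and $B_b \cap B_q$ nonempty, $B_a \cap B_b = \emptyset$, private part nonempty), then dispatch the degenerate cases, checking in each that Theorem~\ref{tMarina} is not violated and, more strongly, that an explicit $H'$ works.

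One clean way to organize the write-up is to prove along the way a structural lemma of the form: \emph{if $q \in KB(H)$ has degree two, then $H$ contains, up to the described modifications, a ``reducible configuration'' at $B_q$}, and then exhibit the replacement graph for each configuration. The verification that $KB(H') = G - \{q\}$ in each configuration is the routine-but-lengthy part; the creative content is the list of configurations and the correct gadget for each.
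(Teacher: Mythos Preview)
Your overall framework---prove a structural lemma pinning down the local picture around $B_q$, then exhibit an explicit $H'$ for each configuration---is exactly the paper's strategy. Where your plan diverges is in how much the structural lemma actually buys, and that is where the real gap lies.

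The paper does not merely classify ``reducible configurations'' at $B_q$: it proves (Lemma~\ref{lemaLoco}, leaning on a computer-verified base case in Lemma~\ref{lemmaCompu}) that for false-twin-free $H$ with $|V(H)|\ge 7$ the condition $d(q)=2$ forces $B_q$ itself to be $K_{1,1}$ or $K_{1,2}$, and moreover forces the shape of $H$ around $B_q$ almost completely (for instance, in the $K_{1,2}=\{b\}\cup\{a,c\}$ case one gets $N(a)=\{b\}$ and $N(b)=\{a,c\}$, so $a,b$ is a pendant $P_2$). Your plan leaves $B_q=B_{q,1}\cup B_{q,2}$ of arbitrary shape and proposes to handle a ``generic'' case; the paper shows there is no generic case, and once that is known the construction of $H'$ is immediate in the $K_{1,2}$ situation ($H'=H-\{a\}$).

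The more serious issue is the $K_{1,1}$ case. There $B_q=\{v\}\cup\{w\}$ with $N[v]=N[w]=\{v,w,x\}$, and \emph{neither} $v$ nor $w$ is private in your sense: $v$ lies in $B_a=\{x\}\cup(N(x)\setminus\{w\})$ and $w$ lies in $B_b=\{x\}\cup(N(x)\setminus\{v\})$. So ``delete the private vertices and attach a small gadget'' deletes nothing, and no local gadget attached to $v$, $w$ or $x$ will kill the edge-biclique $\{v,w\}$ without creating new bicliques through that gadget. The paper's construction instead removes \emph{both} $v$ and $w$ (non-private vertices), then duplicates $x$ as a false twin $y$ and hangs a pendant on each of $x$ and $y$ to recreate two distinct bicliques playing the roles of $B_a$ and $B_b$. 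This is not a gadget attached to surviving vertices of $B_q$; it replaces the whole neighbourhood, and discovering it requires first knowing the rigid global fact that $N(x)\setminus\{v,w\}$ is independent. Your plan as written would stall at exactly this case.
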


Note that it is easy to see that the converse implication does not hold. For instance, the crown graph (five vertices
where two are universal and three with degree two) is not a biclique graph \cite{marinaYoArxiv} but after removing any of
the vertices of degree two, we obtain the diamond graph that is a biclique graph.

Before proving Theorem~\ref{teoGrado2} we need several lemmas.

\begin{lemma}\label{lemmaCompu}
Let $H$ be a false-twin-free graph with $n \geq 7$. If there exists a biclique $B$ in $H$
such that either $K_{1,3} \subseteq B$ or $C_4 = B$, then $d(q) \geq 3$ in $G=KB(H)$ where $q$ is the vertex of $G$
corresponding to the biclique $B$ of $H$.
\end{lemma}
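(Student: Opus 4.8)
The plan is to show that a biclique $B$ of $H$ with $K_{1,3}\subseteq B$ or $C_4=B$ must intersect at least three other bicliques, so that its corresponding vertex $q$ in $KB(H)$ has degree at least $3$. The natural strategy is to exhibit, from the combinatorial structure forced by $B$, three distinct bicliques $B', B'', B'''$ of $H$, each sharing a vertex with $B$, and then to argue these give three distinct neighbors of $q$ in $G$. The false-twin-freeness and the bound $n\geq 7$ are the hypotheses that let us push the construction through: false-twin-freeness prevents certain degeneracies (two vertices with the same neighborhood could make would-be distinct bicliques collapse or fail to exist), and $n\geq 7$ guarantees enough ``room'' outside $B$ to extend partial complete bipartite subgraphs to genuine (maximal) bicliques.

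First I would handle the case $K_{1,3}\subseteq B$. Write the claw with center $c$ and leaves $x_1,x_2,x_3$, so (up to naming the sides) $c\in B_1$ and $x_1,x_2,x_3\in B_2$. Since $x_i$ and $x_j$ are nonadjacent for $i\neq j$ (they lie in the same side of a complete bipartite graph), and since $B$ is a \emph{maximal} induced complete bipartite subgraph, for each leaf $x_i$ there must be a reason it cannot simply be ``moved'' — more to the point, I would look at the edge $c x_i$: each edge of $H$ lies in some biclique, and I want to produce for each $i$ a biclique $B^{(i)}\ni c,x_i$ different from $B$. The cleanest route: for each pair, consider a biclique containing the edge $cx_i$ but, using maximality of $B$ and the presence of the other two leaves, show it cannot equal $B$ for all three $i$ simultaneously unless two of these bicliques coincide — and if $B^{(i)}=B^{(j)}$ for all pairs we would force $x_1,x_2,x_3$ into a common neighborhood structure contradicting false-twin-freeness or maximality. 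I expect to need a short case analysis on whether each $B^{(i)}$ contains $c$ in the size-one side or not, and to invoke $n\geq 7$ when extending $\{c,x_i\}$ to a maximal such subgraph so that the resulting bicliques are pairwise distinct and each meets $B$.

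For the case $B=C_4$, say $B=\{u_1,u_2\}\cup\{w_1,w_2\}$ with $B_1=\{u_1,u_2\}$, $B_2=\{w_1,w_2\}$ (all four cross edges present, $u_1u_2, w_1w_2$ absent). Since $B$ is maximal, neither side can be enlarged: there is no vertex adjacent to both $u_1,u_2$ and nonadjacent to $w_1,w_2$ other than $w_1,w_2$ themselves, and symmetrically. I would produce three distinct bicliques meeting $B$ by starting from the four edges of the $C_4$ and the obstructions to maximality; e.g. the edge $u_1w_1$ extends to a biclique $B'\neq B$ (it must differ since $B'$ would have to use $u_1$ with a neighbor $w\notin\{w_1,w_2\}$ or $w_1$ with a neighbor $u\notin\{u_1,u_2\}$, and such a vertex exists because otherwise $u_1$ and $w_1$ would fail to be false-twin-distinguished from $u_2,w_2$ in the way the hypothesis forbids), and similarly for two more cross-edges, checking pairwise distinctness. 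Again the role of $n\geq 7$ is to ensure there is genuinely a vertex outside $B$ available to witness the non-maximality that forces $B'\neq B$, and to complete partial bipartite graphs to maximal ones.

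The main obstacle I anticipate is \textbf{distinctness}: it is easy to produce bicliques meeting $B$, but hard to certify that three of them are pairwise different and different from $B$. This is exactly where false-twin-freeness is essential — the collapse of two of these auxiliary bicliques into one typically forces two vertices of $H$ to have equal neighborhoods — and where the bound $n\geq 7$ is used to guarantee enough external vertices that the extensions to maximal bicliques are honest. I would therefore organize the proof as: (1) set up the claw/$C_4$ with named vertices and sides; (2) for each relevant edge of $B$, use maximality of $B$ plus existence of an external vertex (here $n\geq 7$) to build an auxiliary biclique strictly containing that edge's endpoints in a configuration differing from $B$; (3) show that any coincidence among these auxiliary bicliques or with $B$ produces a pair of false-twins in $H$, contradicting the hypothesis; (4) conclude $q$ has three distinct neighbors in $G$, i.e. $d(q)\geq 3$.
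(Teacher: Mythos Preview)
Your proposal is a sketch of a direct argument, and it differs substantially from what the paper does. The paper proves the lemma by \emph{induction on $n$}, with the base case $n=7$ handled by computer search over the $507$ false-twin-free graphs on $7$ vertices. The inductive step removes a vertex $v\in A:=V(H)\setminus V(B)$; if $H-\{v\}$ stays false-twin-free one applies induction, and otherwise one analyzes where the newly created false twins sit (inside $A$ or inside $B$), reducing in the first case to a short finite check on $6$-vertex graphs and in the second to a combinatorial argument showing that any two vertices of $A$ lie in distinct bicliques meeting $B$, together with a verification that $|A|\geq 3$.

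Your plan, by contrast, tries to manufacture three auxiliary bicliques directly by extending edges of $B$. This has a concrete gap: it is \emph{not} true that every edge of $B$ extends to a biclique different from $B$. For a clean example take $B=C_4$ on $\{u_1,u_2\}\cup\{w_1,w_2\}$, let $a$ be a pendant at $u_1$, let $b$ be a pendant at $w_1$, and add a seventh vertex $c$ adjacent only to $a$ (this graph is connected and false-twin-free). Then the edge $u_2w_2$ lies in \emph{no} biclique other than $B$; and the four edges of $B$ together yield only two bicliques distinct from $B$, namely $\{u_1\}\cup\{a,w_1,w_2\}$ and $\{w_1\}\cup\{b,u_1,u_2\}$. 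The third biclique intersecting $B$ that the lemma promises is $\{a\}\cup\{c,u_1\}$, which uses a vertex at distance $2$ from $B$ and is invisible to your edge-extension heuristic. So step~(2) of your outline (``for each relevant edge of $B$, build an auxiliary biclique $\neq B$'') fails, and step~(3) (``any coincidence forces false twins'') is not strong enough to recover from this, because the coincidence here does not create false twins --- it is simply that the ``outside'' structure attaches to $B$ at too few places.

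More broadly, the phrase ``I expect to need a short case analysis'' understates the difficulty. Even with the inductive reduction and computer assistance, the paper's argument for the case where removing any $v\in A$ creates false twins inside $B$ already runs through several subcases (including separate treatment of $K_{2,4}$ and $K_{3,3}$). A purely direct argument along your lines would have to absorb all of that, \emph{plus} handle situations like the example above where the third intersecting biclique is generated by vertices not adjacent to $B$ at all. If you want to pursue a direct route, the right objects to track are not the edges of $B$ but the vertices of $A=V(H)\setminus V(B)$ and the bicliques they generate with $V(B)$; this is exactly the pivot the paper makes in the second half of its inductive step.
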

\begin{proof}
The proof is by induction on $n = |V(H)|$. For the base case $n=7$, there are only $507$ false-twin-free graphs. We generated them all
using NAUTY C library~\cite{nauty} and verified the lemma for those graphs that satisfy
the hypothesis, i.e., whenever there exists a biclique $B$ such that either $K_{1,3} \subseteq B$ or $C_4 = B$, we
obtained that $d(q) \geq 3$ in $G=KB(H)$ for $q$ the vertex of $G$ corresponding to the biclique $B$ of $H$.
Therefore, the base case holds.

Now, let $H$ be a false-twin-free graph with $n \geq 8$ and let $B$ be a biclique of $H$ such that either $K_{1,3} \subseteq B$ 
or $C_4 = B$. 
Let $A = V(H)-V(B)$ and consider only all vertices $v \in A$ such that $H-\{v\}$ is connected. 
Now, after removing one of these vertices $v$, some false-twin vertices might appear in
$B$ and $A-\{v\}$. Clearly, if for one vertex $v \in A$, $H-\{v\}$ is false-twin-free, then the biclique $B$ still exists. Therefore, as $|V(H - \{v\})| \geq 7$, by induction hypothesis, $d(q) \geq 3$ in 
$G'=KB(H - \{v\})$ and since $H - \{v\}$ is an induced subgraph of $H$ then $G'$ is a subgraph of $G$ (\cite{marinayo}), that is, $d(q) \geq 3$ in $G$ as desired.
We assume now that $H-\{v\}$ has false-twins for all these vertices $v\in A$.

Suppose first that there is a vertex $v \in A$ (among those that $H-\{v\}$ is connected), such that $H-\{v\}$ has only false-twins in the set $A- \{v\}$. Consider the graph $Tw(H-\{v\})$. 
Clearly, it is false-twin-free and the biclique $B$ still exists. Therefore, if $|V(Tw(H-\{v\}))| \geq 7$ the result follows by induction as before, otherwise $|V(Tw(H-\{v\}))| \leq 6$.
In fact, $|V(Tw(H-\{v\}))| = 6$ since $Tw(H-\{v\})$ has no false-twins, $n \geq 8$, $|B| \geq 4$ and we deleted $v$. Now, there are
$61$ false-twin-free graphs with $6$ vertices and it can be checked (using the computer for example) that only three among them have either a $C_4$ or $K_{1,3}$ biclique such that its corresponding vertex in the biclique graph has degree less than three. These graphs are shown in Figure~\ref{fig3graphs}. 
One can verify that wherever we add back the vertex $v$ in order to maintain the property of being 
false-twin-free, we would have at least one more biclique that intersects $B$, that is, the degree of its corresponding vertex in the biclique graph is at least three as we wanted to prove.

\begin{figure}[ht]
  \centering
  \includegraphics[scale=.3]{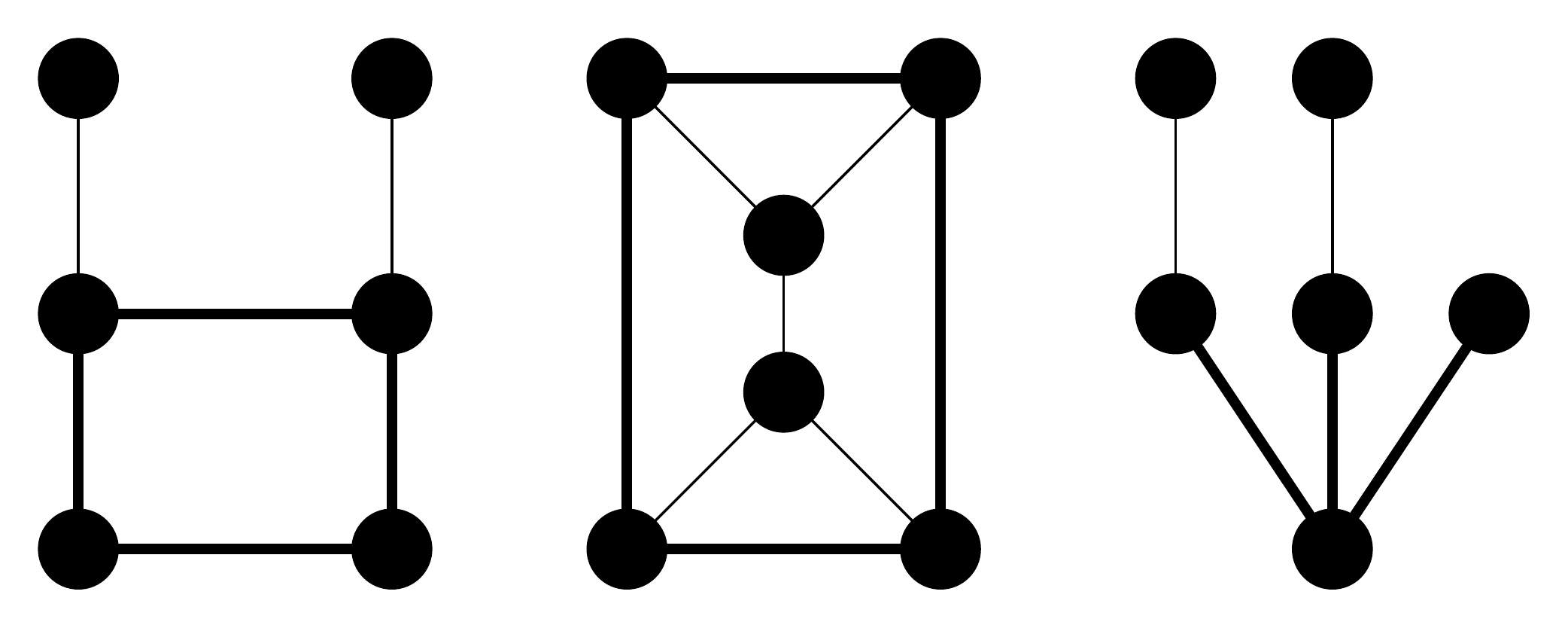}
  \caption{Unique three false-twin-free graphs with $6$ vertices and a biclique isomorphic to either $C_4$ or $K_{1,3}$ such that in their biclique graph, the vertex corresponding to
  that biclique has degree two. In bold edges these bicliques are shown.}
  \label{fig3graphs}
\end{figure} 
 
Suppose next that $H-\{v\}$ has false-twins in the set $B$. Moreover, we can assume that for all $v \in A$ 
(again, among those that $H-\{v\}$ is connected), $H-\{v\}$ has false-twin vertices in $B$, since if $H-\{v\}$ has 
false-twins only in $A-\{v\}$, we proceed exactly as in the previous case.
 
We show first that $|A| \geq 3$. 
Clearly, as $n \geq 8$, if $C_4=B$, then $|A| \geq 4$ and there is nothing to show. Now, if $K_{1,3} \subseteq B$ and 
$|A| \leq 2$, again, since $n \geq 8$, we have that either $K_{1,s} \subseteq B$ for $s \geq 5$, $K_{2,4} = B$ or $K_{3,3} = B$. In the first case, $H$ has false-twin vertices, since only two vertices in $A$ are not enough to make $s \geq 5$ vertices of $K_{1,s} \subseteq B$ not false-twins, therefore a contradiction. For the other two cases we have that $|A|=2$ thus $n=8$ as $H$ has no 
false-twins\footnote{We remark that there exist $7442$ false-twin-free graphs on $8$ vertices (\cite{nauty}) therefore this case could also be easily verified with the computer.}. 
If $K_{2,4} = B$, then the only way to make the $4$ vertices in one part of the partition of $K_{2,4}$
not false-twins using two vertices of $A$ is shown in Figure~\ref{lemacompu1}. Now depending on the adjacencies
between vertices $v,w$ and $x,y$ (at least one edge must exist as $x,y$ are not false-twins), we can easily see that we always obtain at least $3$ other bicliques that intersect $B$, 
that is, $d(q) \geq 3$ in $G=KB(H)$. Finally, if $K_{3,3} = B$, let $v,w \in A$. Since $H$ is false-twin-free, each of $v$ and $w$ must be adjacent to at least one and at most two vertices of each part of the partition
of $B$, and moreover, their neighbors of each part of the partition cannot be the same. Therefore each of $v$ and $w$ is not adjacent to at least one and at most two vertices of each part of the partition of $B$ and these
sets of non-neighbors of each part of the partition cannot be the same neither. Now, taking $v$ along with the maximal set of adjacent vertices to $v$ of one part of the partition and the maximal set of non-adjacent vertices
to $v$ on the other part, we obtain one biclique. Doing the same but taking adjacent and non-adjacent vertices to $v$ of opposite parts of the partition we obtain a second biclique. Finally, using the same argument for $w$, we obtain two more bicliques, that is, we have in total at least $4$ bicliques that intersect $B$, so $d(q) \geq 4$ in $G=KB(H)$. 
Thus we assume in what follows that $|A| \geq 3$.

\begin{figure}[ht]
  \centering
  \includegraphics[scale=.4]{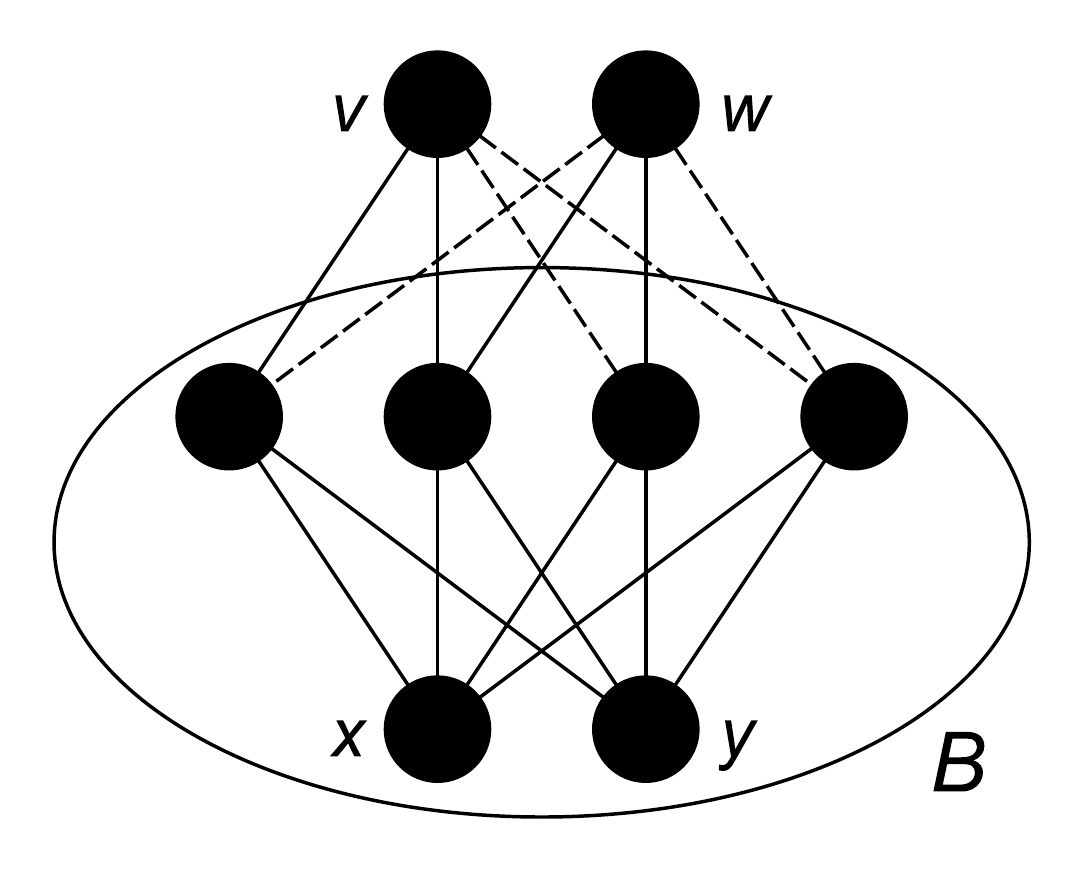}
  \caption{Graph on $8$ vertices with a biclique $B=K_{2,4}$ and no false-twins. Dotted edges are not present however edges
  between $v,w$ and $x,y$ might exist (at least one must, as $x,y$ are not false-twins).}
  \label{lemacompu1}
\end{figure}

We show now that if $v,w \in A$, then they belong to two different bicliques that intersect $B$. This would imply, as $|A| \geq 3$,
that we have $3$ other bicliques in $H$ that intersect $B$, that is, $d(q) \geq 3$ in $G$. Recall now that $H-\{v\}$ and
$H-\{w\}$ have false-twins in $B$. Let $x_1,x_2$ be the false-twins of $B$ in $H-\{v\}$, with $v$ 
adjacent to $x_1$ and let $y_1,y_2$ be the false-twins of $B$ in $H-\{w\}$, with $w$ adjacent to $y_1$.
We have the following cases:
\begin{itemize}
 \item False-twins belong to different parts of the partition of $B$ (see Figure~\ref{lemacompu2}). 
 Now $\{v\}\cup\{x_1\}$ and $\{w\}\cup\{y_1\}$ belong to two different bicliques that intersect $B$ unless $v$ is adjacent to $w$,
 $x_1$ not adjacent to $w$ and $y_1$ not adjacent to $v$. In this case $\{v\}\cup\{x_1\}$ and $\{w\}\cup\{y_1\}$ would belong to the same biclique containing $\{w,x_1\}\cup\{v,y_1\}$. However, we would also obtain that $\{x_1\}\cup\{v,y_1,y_2\}$ and $\{y_1\}\cup\{w,x_1,x_2\}$ are contained in two different  bicliques that intersect $B$ as $y_2,v$ and $x_2,w$ are not adjacent ($N(x_1)-\{v\}=N(x_2)$ and $N(y_1)-\{w\}=N(y_2)$).
 \begin{figure}[ht]
  \centering
  \includegraphics[scale=.4]{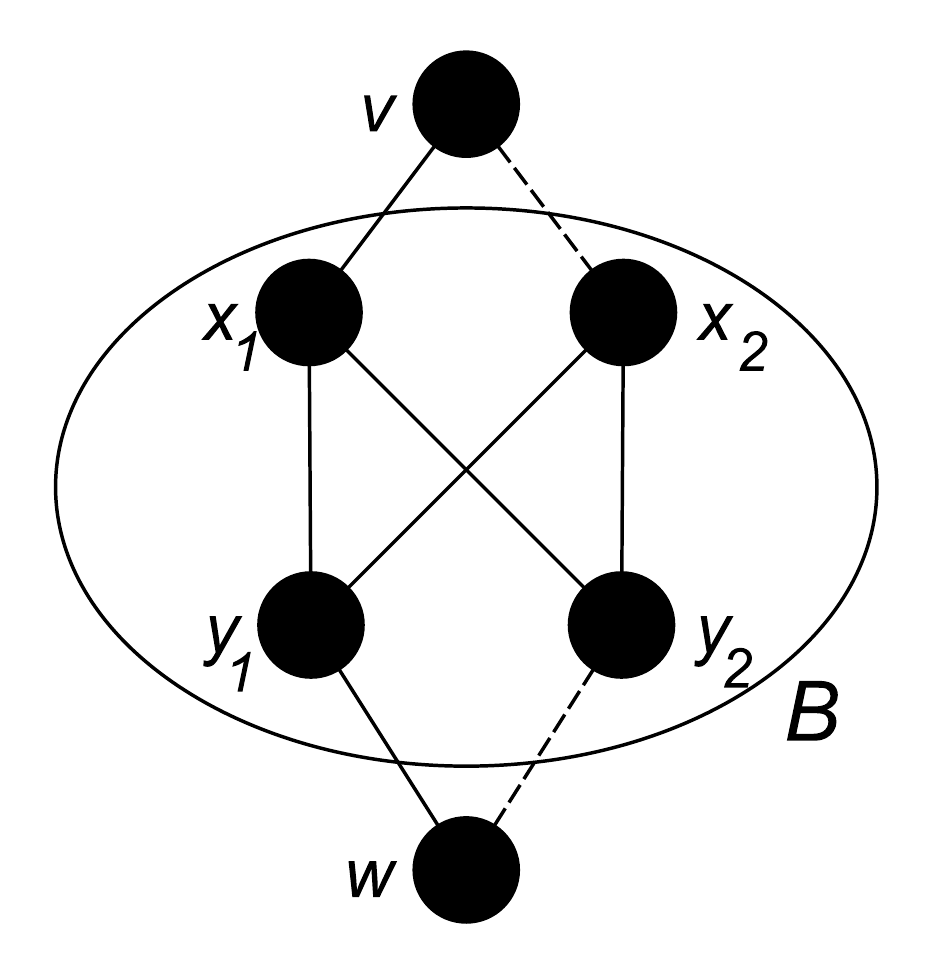}
  \caption{False-twins belong to different parts of the partition of biclique $B$.}
  \label{lemacompu2}
\end{figure}
 \item False-twins belong to the same part of the partition of $B$. We have more cases now depending on $x_1,x_2,y_1$ and $y_2$.
 \begin{itemize}
  \item $x_1,x_2,y_1,y_2$ are all different (see Figure~\ref{lemacompu3}). 
  We have that $\{v\}\cup\{x_1\}$ and $\{w\}\cup\{y_1\}$ belong to two different bicliques that intersect $B$ unless $v$ is not adjacent to $w$, $x_1$ adjacent to $w$ and $y_1$ adjacent to $v$, in which case we obtain that $\{w\}\cup\{x_1,x_2,y_1\}$ and $\{v\}\cup\{y_1,y_2,x_1\}$ are contained in two different bicliques intersecting $B$ as $y_2,v$ and $x_2,w$ are adjacent ($N(x_1)-\{v\}=N(x_2)$ and $N(y_1)-\{w\}=N(y_2)$).
\begin{figure}[ht]
  \centering
  \includegraphics[scale=.4]{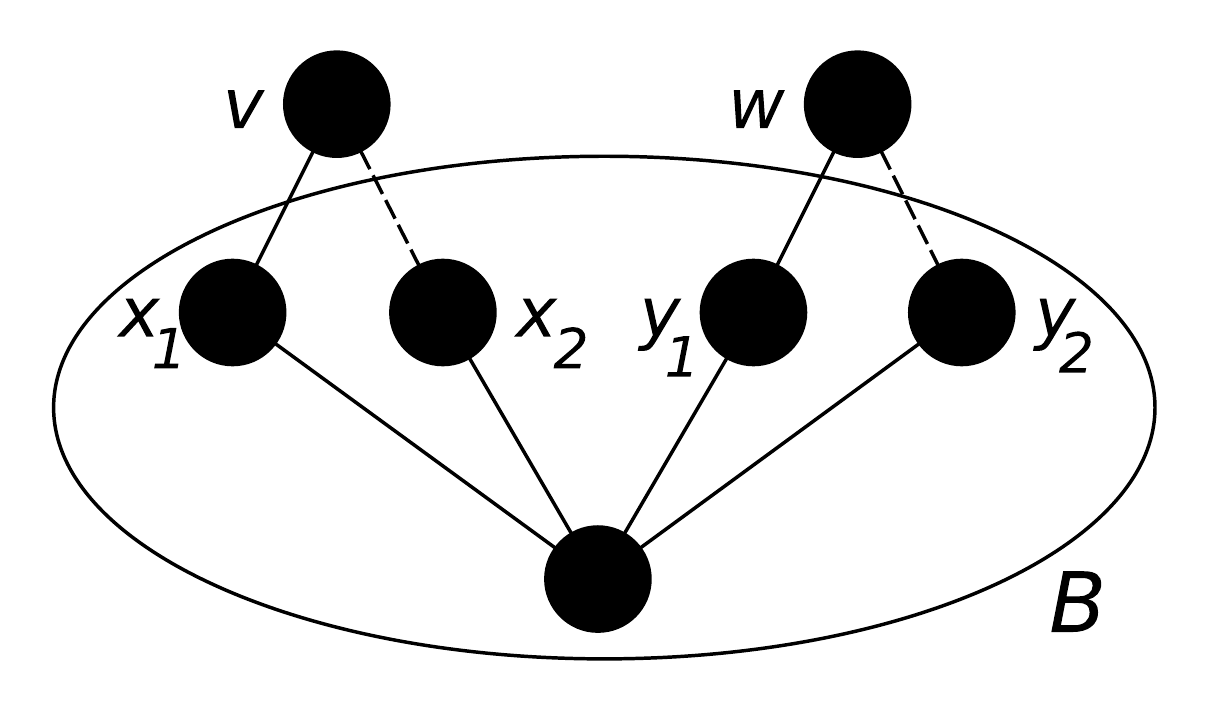}
  \caption{Case where $x_1,x_2,y_1,y_2$ are all different.}
  \label{lemacompu3}
\end{figure}
  \item $x_1 = y_1$ and $x_2=y_2$ (or $x_1 = y_2$ and $x_2=y_1$). Clearly this case cannot happen.
  \item $x_1 \neq y_1$ and $x_2 = y_2$. Since $v,y_1$ and $w,x_1$ are not adjacent (otherwise $v,x_2$ and $w,x_2$ would be
  adjacent as well which is a contradiction), we obtain that $\{v\}\cup\{x_1\}$ and $\{w\}\cup\{y_1\}$ belong to two different bicliques that intersect $B$.
  \item $x_1 \neq y_2$ and $x_2 = y_1$. Observe that $w,x_1$ should be adjacent.
  In this case, $\{v\}\cup\{x_1\}$ and $\{w\}\cup\{x_1,x_2\}$ belong to two different bicliques that intersect $B$.
 \end{itemize}
\end{itemize}
Since we covered all cases, we conclude that $d(q) \geq 3$ in $G = KB(H)$ for $q$ the vertex of $G$ corresponding to the biclique $B$ of $H$, as desired.
\end{proof}

\begin{lemma}\label{lemaLoco}
 Let $G=KB(H)$ for some graph $H$ without false-twin vertices and $|V(H)| \geq 7$. Let $q$ be a vertex of $G$ such that $d(q) = 2$
 and let $B$ be its corresponding biclique in $H$. Then we have one of the following:
 \begin{enumerate}
  \item $B=K_{1,1}$ and $H$ belongs to the family of the following graphs where $I$ is an independent set. 
  \begin{center}
   \includegraphics[scale=.3]{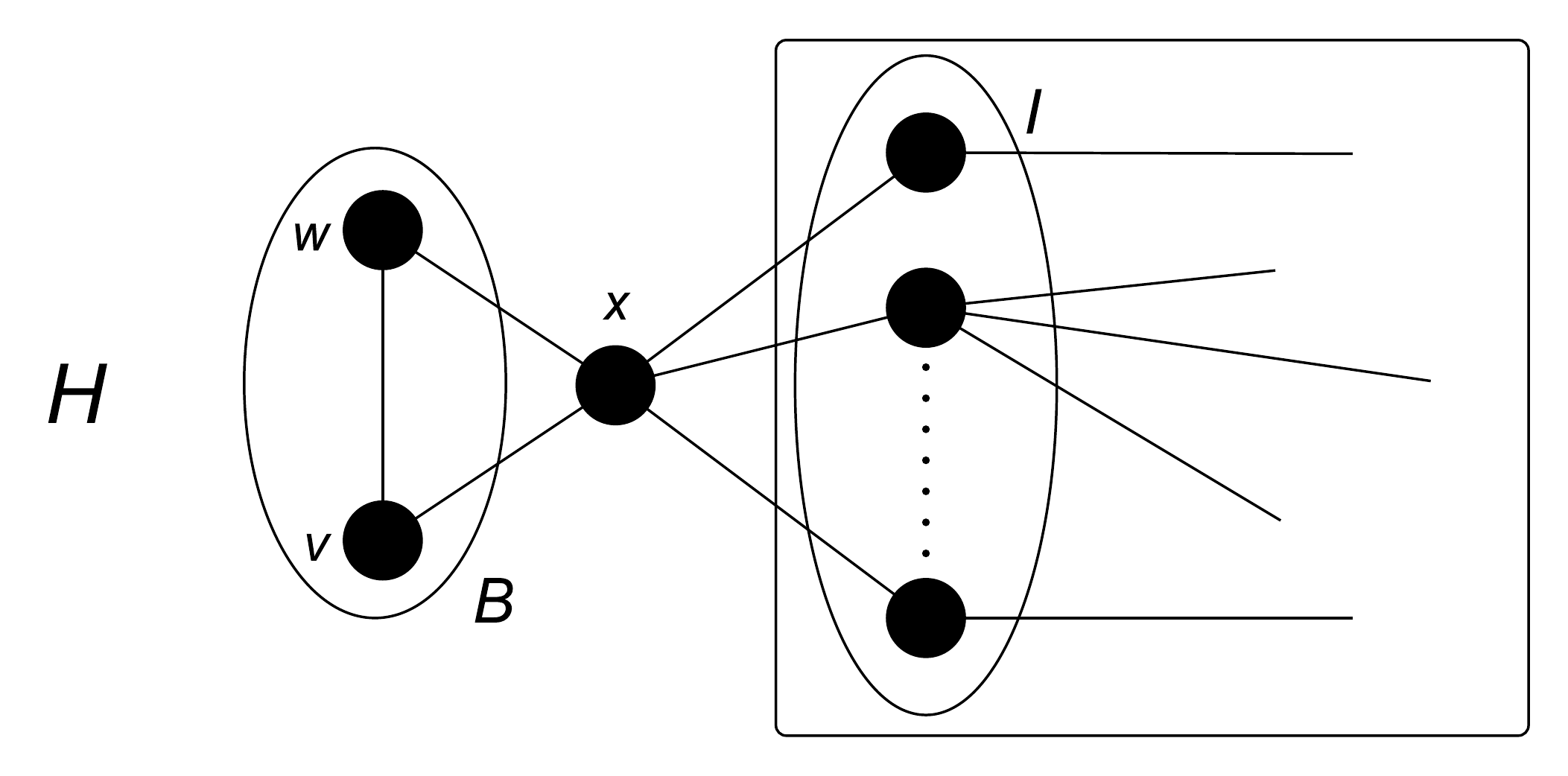}
  \end{center}
  \item $B=K_{1,2}$ and $H$ belongs to the family of the following graphs where $N(a)=\{b\}$ and $N(b)=\{a,c\}$.
	\begin{center}
   \includegraphics[scale=.3]{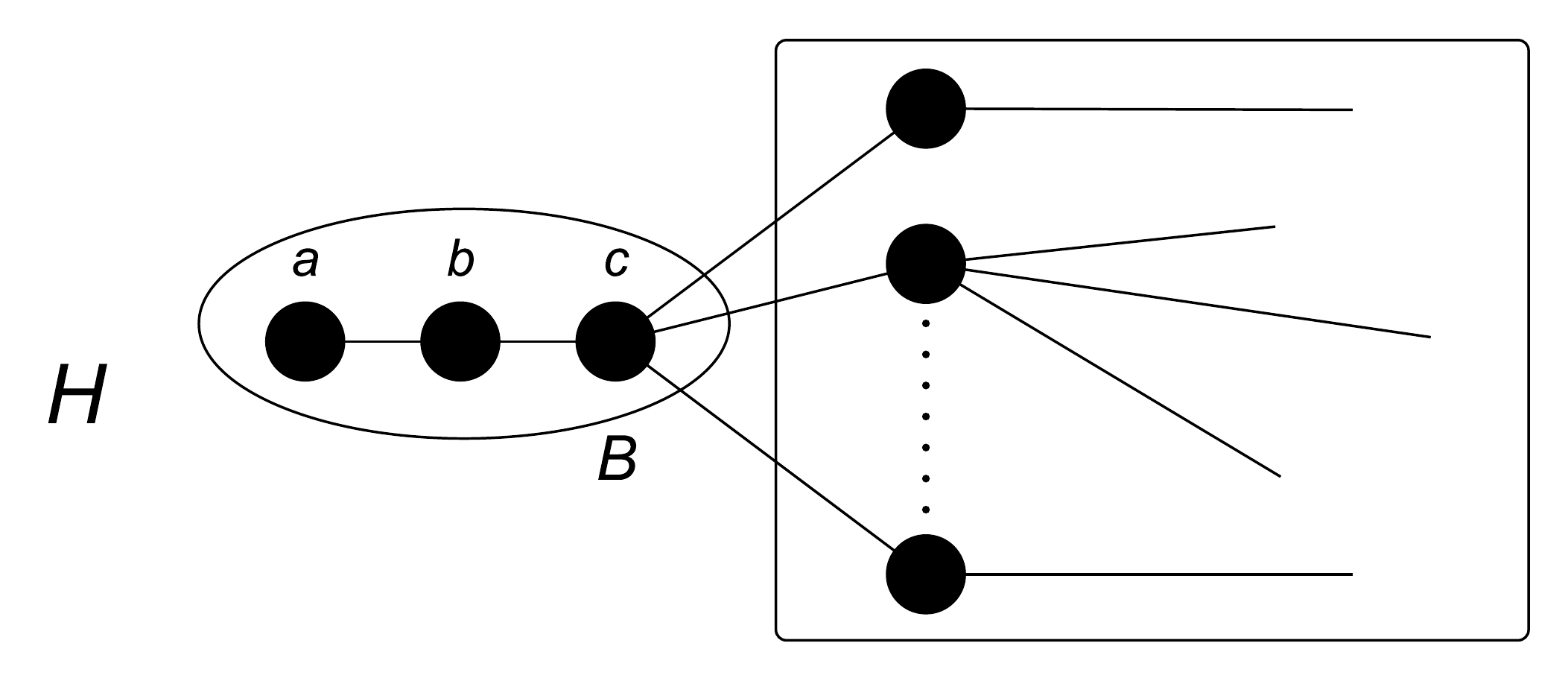}
  \end{center}
 \end{enumerate}
\end{lemma}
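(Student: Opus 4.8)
The plan is to first determine the isomorphism type of the biclique $B$, then, using that $q$ has only two neighbours in $G$, to reconstruct $H$ around $B$.

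\textbf{Step 1 (pinning down $B$).} Since $d(q)=2<3$ and $H$ is false-twin-free with $|V(H)|\ge 7$, Lemma~\ref{lemmaCompu} forbids $K_{1,3}\subseteq B$ and $B=C_4$. Writing $B=K_{n,n'}$ with $1\le n\le n'$: if $n'\ge 3$, a vertex of the size-$n$ side together with three vertices of the size-$n'$ side induces $K_{1,3}\subseteq B$; and if $n=n'=2$ then $B=C_4$. Both being excluded, we get $n=1$ and $n'\le 2$, i.e.\ $B=K_{1,1}$ or $B=K_{1,2}$. This already yields the dichotomy of the statement; what remains is to identify $H$ in each case.

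\textbf{Step 2 (the two neighbours of $q$ are adjacent).} As $G=KB(H)$ is a biclique graph, Theorem~\ref{tMarina} applies to $G$. Let $q_1,q_2$ be the two neighbours of $q$ in $G$, corresponding to the two bicliques $B_1,B_2\ne B$ of $H$ that meet $B$. In a diamond or a gem, a vertex of degree $2$ has adjacent neighbours, hence is never the centre of an induced $P_3$; so if $q_1\not\sim q_2$, the induced $P_3$ $q_1 q q_2$ would lie in no induced diamond or gem, contradicting Theorem~\ref{tMarina}. Therefore $q_1\sim q_2$, i.e.\ $B_1\cap B_2\ne\varnothing$. Consequently the three bicliques $B,B_1,B_2$ pairwise intersect, and — since every edge of $H$ lies in some biclique — \emph{every} edge of $H$ incident with a vertex of $B$ lies inside one of $B,B_1,B_2$.

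\textbf{Step 3 (reconstructing $H$).} Knowing that $B\cup B_1\cup B_2$ captures every edge at a vertex of $B$, I would now argue by cases. If $B=K_{1,1}=\{u\}\cup\{v\}$, maximality of $B$ forces every other vertex of $H$ to be adjacent to both or to neither of $u$ and $v$; write $S$ for the common-neighbour set. Each edge from $u$ to $S$ and from $v$ to $S$ must be covered by $B_1$ or $B_2$; using that $B_1$ and $B_2$ are themselves \emph{maximal} induced complete bipartite graphs, that $H$ is connected and false-twin-free, and that $|V(H)|\ge 7$, one shows $S$ is independent, each of its vertices is adjacent to both $u$ and $v$, and no further vertices survive — so $H$ is one of the graphs of family~(1). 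If $B=K_{1,2}$, say $B=\{b\}\cup\{a,c\}$, the same bookkeeping shows that one leaf, say $a$, has $N(a)=\{b\}$, whence $N(b)=\{a,c\}$; tracking which edges at $c$ (and at the remaining vertices) $B_1$ and $B_2$ must absorb then forces $H$ into family~(2).

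\textbf{Main obstacle.} The hard part is Step 3, specifically the verification that \emph{no} biclique besides $B,B_1,B_2$ touches $B$ — i.e.\ that $d(q)$ is exactly $2$ and not more. This requires deciding precisely which induced complete bipartite subgraphs of the reconstructed $H$ are maximal, and the maximality constraints interact subtly with the false-twin-free and connectivity hypotheses, producing several near-miss configurations. As in Lemma~\ref{lemmaCompu}, I expect a bounded family of small sporadic graphs to have to be ruled out by direct (possibly computer-assisted) inspection before the uniform argument closes both cases.
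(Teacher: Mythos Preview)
Step~1 matches the paper. Step~2 is a correct observation that the paper does \emph{not} make: the proof in the paper never invokes Theorem~\ref{tMarina}, and it never uses that $B_1\cap B_2\neq\varnothing$. Its method throughout is more direct---assume a forbidden local configuration near $B$ and exhibit at least three explicit bicliques meeting $B$, contradicting $d(q)=2$.

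The genuine gap is Step~3, and there are two concrete issues. First, your guessed target for $B=K_{1,1}$ is wrong: the common-neighbour set $S$ of $u,v$ is not an independent set of arbitrary size but a \emph{single} vertex $x$. (If $|S|\ge 2$, take $x_1,x_2\in S$: if $x_1\sim x_2$ then $\{u,x_1\},\{u,x_2\},\{v,x_1\},\{v,x_2\}$ sit in four distinct bicliques through $B$; if $x_1\not\sim x_2$, false-twin-freeness gives a vertex $x_1'$ distinguishing them, and again four bicliques appear.) The independent set $I$ in the statement is $N(x)\setminus\{u,v\}$, whose independence is a separate short argument; your sentence ``$S$ is independent \ldots\ and no further vertices survive'' misidentifies the structure and cannot accommodate $|V(H)|\ge 7$. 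Second, your expectation that the $K_{1,2}$ case needs computer-assisted enumeration is misplaced: the paper handles it entirely by hand. Writing $B=\{b\}\cup\{a,c\}$, one first shows $N(b)=\{a,c\}$ by giving $b$ a hypothetical extra neighbour $b'$ and branching on its adjacencies to $a$ and $c$; this spawns a nested case tree of about a dozen leaves, but each leaf ends with an explicit list of three bicliques meeting $B$. A short final argument then forces one of $a,c$ to have $b$ as its sole neighbour. Your edge-covering remark from Step~2 (every edge at a vertex of $B$ lies in $B\cup B_1\cup B_2$) could in principle replace some branches, but turning that containment into the precise conclusion $N(a)=\{b\}$, $N(b)=\{a,c\}$ still requires determining the possible shapes of $B_1$ and $B_2$, which is essentially the same casework.
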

\begin{proof}
 Suppose first that $B=K_{1,1}=\{v\}\cup\{w\}$. Since the biclique $B$ is a $K_{1,1}$, we must have that $N[v]=N[w]$. 
 By contradiction, suppose that
 $v,w$ have more than one common neighbor. Let $x_1,x_2$ be two neighbors of $v$ and $w$. If $x_1$ and $x_2$ are adjacent
 then we have that $\{v\}\cup\{x_1\}$, $\{v\}\cup\{x_2\}$, $\{w\}\cup\{x_1\}$ and $\{w\}\cup\{x_2\}$ are contained in
 four different bicliques that intersect $B$. This is a contradiction since $d(q)=2$ in $G$. Now if $x_1$ and $x_2$ 
 are not adjacent, since $H$ has no false-twins, there exists a vertex $x_1'$ adjacent to $x_1$ and not adjacent to $x_2$, or
 a vertex $x_2'$ adjacent to $x_2$ and not adjacent to $x_1$. Assume without loss of generality the first case.
 Now we have that $\{v\}\cup\{x_1,x_2\}$, $\{w\}\cup\{x_1,x_2\}$, $\{x_1\}\cup\{v,x_1'\}$ and $\{x_1\}\cup\{w,x_1'\}$ are contained in four different bicliques that intersect $B$, again a contradiction. Therefore $v$ and $w$ have exactly one common neighbor, say $x$, that is, $N[v]=N[w] = \{v,w,x\}$. Now as $|V(H)| \geq 7$,
 there are more vertices in the graph. Suppose that $x$ has two adjacent vertices, say $x_1,x_2$. We obtain that
 $\{x\}\cup\{v,x_1\}$, $\{x\}\cup\{v,x_2\}$, $\{x\}\cup\{w,x_1\}$ and $\{x\}\cup\{w,x_2\}$ are contained in four different bicliques intersecting $B$, a contradiction. We conclude then that 
$N(x) - \{v,w\}$ in an independent set, that is, \textit{(1)} holds.
 
 Suppose last that $B$ is bigger than $K_{1,1}$. Now if $K_{1,3} \subseteq B$ or $C_4 = B$, by Lemma~\ref{lemmaCompu}, we
 have that $d(q) \geq 3$ in $G$, that is, a contradiction. Therefore $B=K_{1,2}$. Let $B = \{b\} \cup \{a,c\}$ as shown in the statement of the lemma.
 We will show that $N(a)=\{b\}$ and $N(b)=\{a,c\}$ by contradiction.
 
 \textbf{Case (A)} Suppose first that $b$ has another neighbor $b'$. Now, as $B=K_{1,2}$, we have the following cases depending on the adjacencies of $b'$.

 \textbf{Case (A1)} $b'$ is adjacent to $a$ and $c$: as $H$ has no false-twins, there exists a vertex $a'$ adjacent to $a$ and not adjacent to $c$ (equivalent for a vertex $c'$ adjacent to $c$
 and not to $a$). Therefore we have that the sets $\{b\} \cup \{b'\}$ and $\{b'\} \cup \{a,c\}$ are contained in two different bicliques that intersect $B$. Finally,
 if $a'$ is not adjacent to $b$ then $\{a\} \cup \{a',b\}$ is contained in a third biclique that intersects $B$, otherwise, $\{b\} \cup \{a',c\}$
 is contained in a third biclique that intersects $B$ (see Figure~\ref{caseA1}). In both cases we obtain a contradiction, therefore $b'$ cannot be adjacent to both $a$ and $c$.
 
\begin{figure}[ht]
  \centering
  \includegraphics[scale=.4]{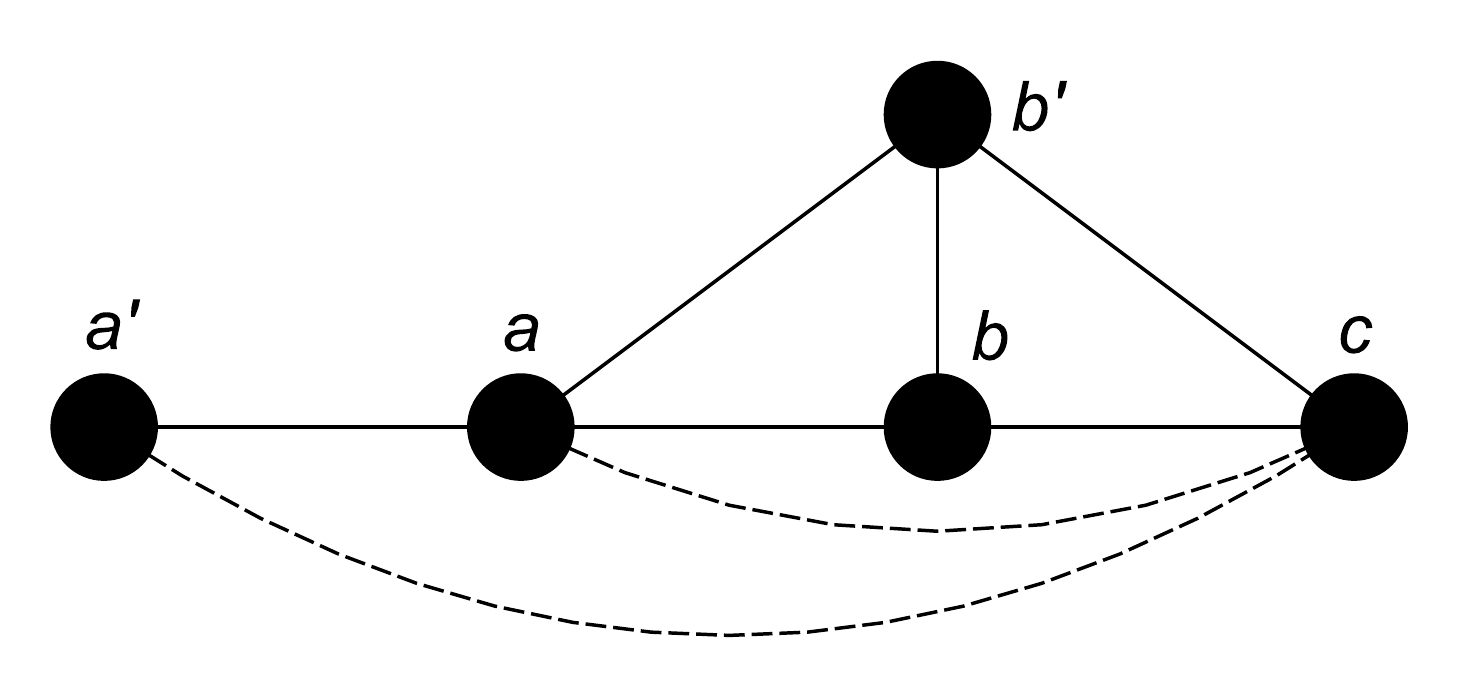}
  \caption{\textbf{Case (A1)}.}
  \label{caseA1}
\end{figure} 
 
 \textbf{Case (A2)} $b'$ is adjacent to $a$ and not adjacent to $c$ (symmetric for $b'$ adjacent to $c$ and not adjacent to $a$): 
 We have more cases here.

 \textbf{Case (A2a)} Suppose now that $b$ has another neighbor $b''$. As $B=K_{1,2}$, we have the following cases:

 	\textbf{Case (A2a1)} $b''$ is adjacent to $c$ and not adjacent to $a$: Here we have that $\{b\} \cup \{b',c\}$, $\{b\} \cup \{b'',a\}$ and $\{a\} \cup \{b'\}$ are contained in 
 	three different bicliques respectively, each of them intersecting $B$, that is, a contradiction.
 	
 	\textbf{Case (A2a2)} $b''$ is adjacent to $a$ and not adjacent to $c$: Observe first that if $b'$ is adjacent to $b''$, we obtain that $\{a,b,b',b''\}$ induces a $K_4$, that is, 
 	there are four different bicliques that intersect $B$ which is a contradiction. Then $b'$ is not adjacent to $b''$ and since $H$ has no false-twins, there exists a vertex $\tilde{b}$ adjacent 
 	to $b'$ and not adjacent to $b''$ (see Figure~\ref{caseA2a2}). Now $\{b\} \cup \{b',b'',c\}$ and $\{a\} \cup \{b',b''\}$ are contained in two different bicliques respectively, 
 	intersecting $B$. Finally, if $a$ is not adjacent to $\tilde{b}$, then $\{b'\} \cup \{a,\tilde{b}\}$ is contained in a third biclique intersecting $B$, otherwise 
 	$\{a\} \cup \{b'',\tilde{b}\}$ is contained in a third biclique intersecting $B$. In both cases we obtain a contradiction. 	

\begin{figure}[ht]
  \centering
  \includegraphics[scale=.4]{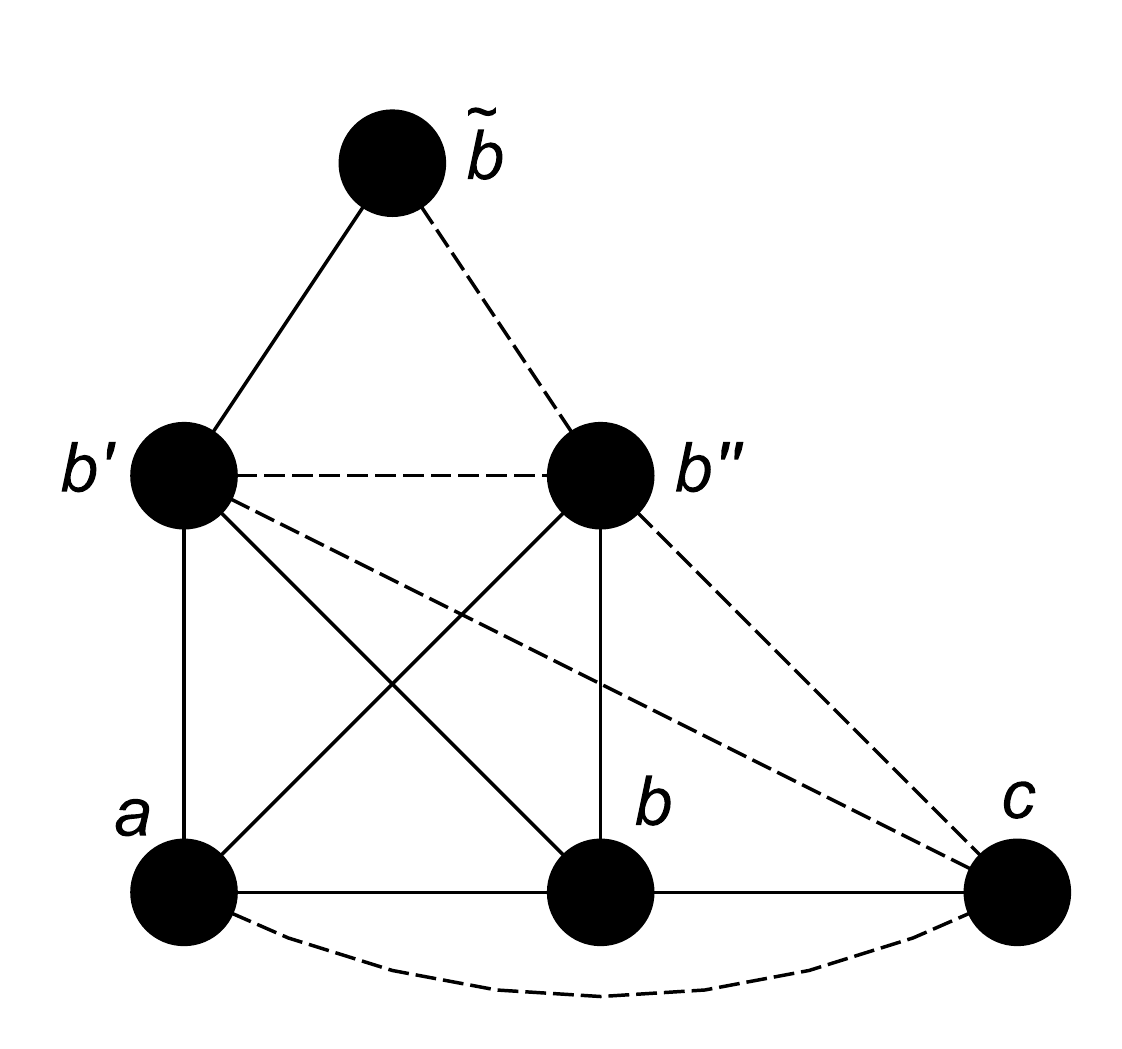}
  \caption{\textbf{Case (A2a2)}.}
  \label{caseA2a2}
\end{figure}

 We can conclude that $b$ does not have other neighbors than $a,b'$ and $c$. 
 
 \textbf{Case (A2b)} Suppose now that $a$ has another neighbor $a'$. Clearly, $a'$ is not adjacent to $b$ (\textbf{Case (A2a)}) and
 $a'$ is not adjacent to $c$, as $B=K_{1,2}$. In this case we obtain that the sets $\{b\} \cup \{b',c\}$, $\{a\} \cup \{b'\}$ and $\{a\} \cup \{a',b\}$ are contained in three different bicliques
 respectively, intersecting $B$, a contradiction. We conclude then that $a$ does not have other neighbors than $b$ and $b'$.
 
 \textbf{Case (A2c)} Suppose now that $b'$ has another neighbor $b''$. As shown before, $b''$ cannot be adjacent to either $a$ or $b$. Now, if $b''$ is not adjacent to $c$, we obtain that
 $\{b\} \cup \{b',c\}$, $\{b'\} \cup \{b'',a\}$ and $\{b'\} \cup \{b'',b\}$ are contained in three different bicliques respectively, intersecting $B$, which is a contradiction.
 Therefore, $b''$ is adjacent to $c$. So far we have that $\{b'\} \cup \{b'',a\}$ and $\{b',c\} \cup \{b'',b\}$ are contained in two different bicliques intersecting $B$ (see Figure~\ref{caseA2c}).
 Now, as $|V(H)| \geq 7$, the graph contains more vertices: 
 
\begin{figure}[ht]
  \centering
  \includegraphics[scale=.4,trim={0 0 0 4cm},clip]{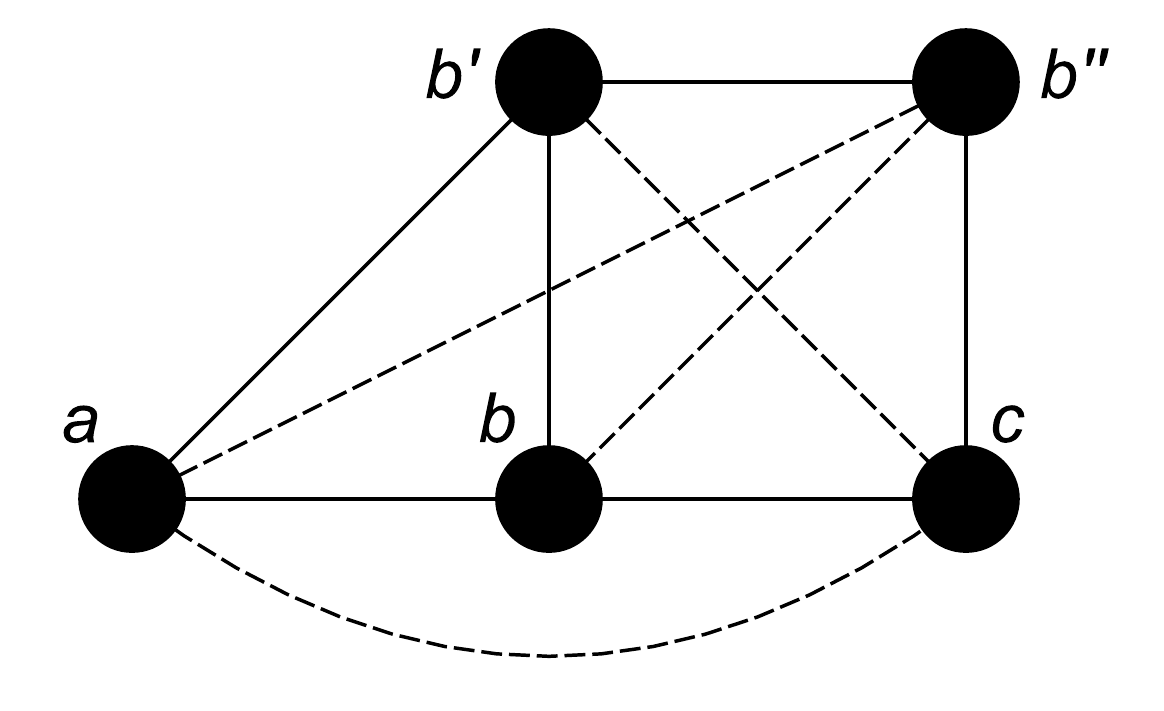}
  \caption{\textbf{Case (A2c)}.}
  \label{caseA2c}
\end{figure}
 
 \textbf{Case (A2c1)} Suppose $c$ has another neighbor $c'$. Clearly, $c'$ is not adjacent to $a$ and $b$. If $c'$ is not adjacent to $b'$,
 we obtain that $\{b\} \cup \{b',c\}$, $\{a\} \cup \{b'\}$ and $\{c\} \cup \{c',b\}$ are contained in three different bicliques respectively, intersecting $B$, a contradiction.
 Therefore, $c'$ is adjacent to $b'$. We have the following cases now:

 \textbf{Case (A2c1a)} $c'$ is adjacent to $b''$: In this case we have that $\{b'\} \cup \{b'',a\}$, $\{b'\} \cup \{c',a\}$, $\{b',c\} \cup \{b,c'\}$ and $\{b',c\} \cup \{b'',b\}$ 
 are contained in four different bicliques respectively, intersecting $B$, which is a contradiction.
 
 \textbf{Case (A2c1b)} $c'$ is not adjacent to $b''$: Now, as $H$ has no false-twin vertices, there exists a vertex $\tilde{c}$ adjacent to $c'$ and not adjacent to $b''$ (equivalent for 
 a $\tilde{b}$ adjacent to $b''$ and not adjacent to $c'$). We have that $\{b'\} \cup \{a,b'',c'\}$ and $\{b',c\} \cup \{b'',b,c'\}$ are contained in two different bicliques respectively, 
  intersecting $B$ (see Fig~\ref{caseA2c1b}). Finally, if $\tilde{c}$ is not adjacent to $c$, $\{c'\} \cup \{c,\tilde{c}\}$, is contained in a third biclique that intersects $B$, 
  otherwise $\{c\} \cup \{\tilde{c},b\}$ is contained in a third biclique. In both cases, that is a contradiction.
  
 \begin{figure}[ht]
  \centering
  \includegraphics[scale=.4,trim={0 0 0 3cm},clip]{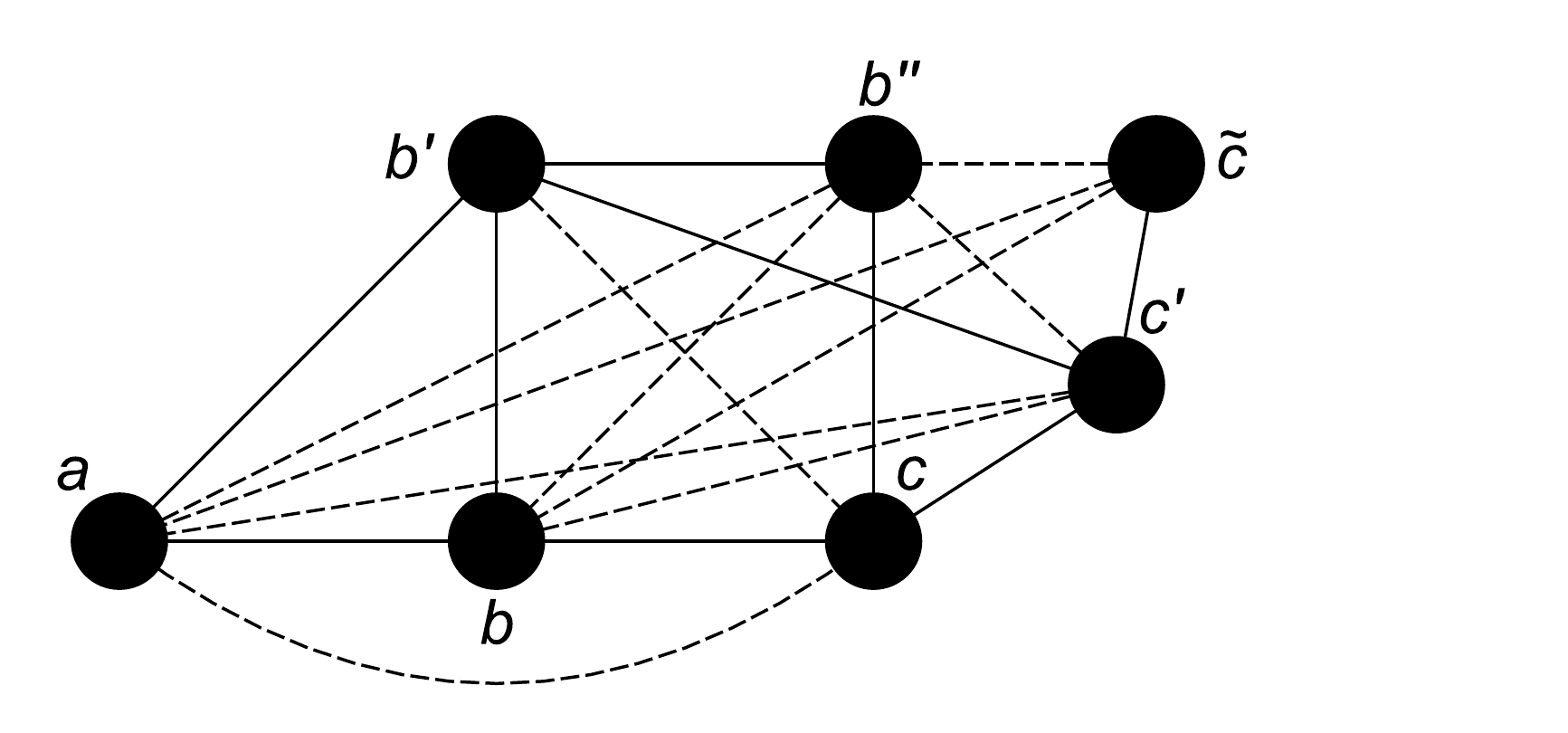}
  \caption{\textbf{Case (A2c1b)}.}
  \label{caseA2c1b}
\end{figure}  
 
 We conclude then that $c$ does not have other neighbors than $b$ and $b''$.
 
 \textbf{Case (A2c2)} Suppose $b'$ has another neighbor $\tilde{b}$. If $\tilde{b}$ is not adjacent to $c$, refer to \textbf{Case (A2c)} (when $b''$ is not adjacent to $c$). Otherwise, 
 refer to \textbf{Case (A2c1)} (when $c'$ is adjacent to $b'$).
 We can conclude that $b'$ has no other neighbors than $a,b$ and $c'$. 
 
 \textbf{Case (A2c3)} Suppose $b''$ has another neighbor $\tilde{b}$. Observe that $\tilde{b}$ cannot be adjacent to any of $a$ (\textbf{Case (A2b)}), $b$ (\textbf{Case (A2a)}), $c$ (\textbf{Case (A2c1)}) and $b'$ (\textbf{Case (A2c2)}).
 Therefore $\{a\} \cup \{b'\}$, $\{b,b''\} \cup \{b',c\}$ and $\{b''\} \cup \{c,b',\tilde{b}\}$ are contained in three different bicliques respectively, each of them intersecting $B$, 
 that is, a contradiction. Thus, $b''$ cannot have other neighbors than $b'$ and $c$.

 As in these three cases we obtained a contradiction, we can conclude that $b'$ has no other neighbors than $a$ and $b$.

 \textbf{Case (A2d)} Suppose now that $c$ has another neighbor $c'$. If $c'$ is adjacent to $b'$, then refer to \textbf{Case (A2c)} (when $b''$ is adjacent to $c$), otherwise we have that 
 $\{a\} \cup \{b'\}$, $\{b\} \cup \{b',c\}$ and $\{c\} \cup \{c',b\}$ are contained in three different bicliques respectively, that intersect $B$, which is a contradiction.
 Thus, $c$ cannot have other neighbors than $b$. 
 
 To conclude \textbf{Case (A2)}, if $b'$ is adjacent to $a$ and not adjacent to $c$, we obtain a contradiction.

 To summarize \textbf{Case (A)}, whenever we supposed that $b$ had another neighbor $b'$, we have obtained a contradiction. So, we conclude that $b$ has no other neighbors than $a$ and $c$,
 that is, $N(b)=\{a,c\}$. 
 
 \textbf{Case (B)} Suppose finally that $a$ has a neighbor $a'$ and $c$ has a neighbor $c'$.
 Clearly $a' \neq c'$ as $B = K_{1,2}$ and $b$ has only $a$ and $c$ as neighbors (\textbf{Case (A)}). Using the same argument, we obtain that $a'$ is not adjacent to $c$
 neither $c'$ to $a$. Observe now that if $a'$ is adjacent to $c'$, we obtain that $\{a,b,c,c',a'\}$ induces a $C_5$, that is, there are four different bicliques
 that intersect $B$ which is a contradiction. 
 Suppose now that $a$ has another neighbor $a''$ (equivalent if $c$ has another neighbor). Like before, $a''$ is not adjacent to $c$ and $c'$.
 Now if $a'$ is adjacent to $a''$, we obtain that $\{a\} \cup \{a',b\}$, $\{a\} \cup \{a'',b\}$ and $\{c\} \cup \{c',b\}$ are contained in three different bicliques
 respectively, each of them intersecting $B$, that is, a contradiction. Therefore, $a'$ is not adjacent to $a''$. Now as $H$ is false-twin-free, there
 exists a vertex $\tilde{a}$ adjacent to $a'$ and not adjacent to $a''$. Clearly $\tilde{a} \neq c'$, otherwise we have an induced $C_5$ so four different bicliques 
 intersecting $B$. Therefore,  we obtain that $\{a\} \cup \{a',a'',b\}$, $\{a'\} \cup \{a,\tilde{a}\}$ and $\{c\} \cup \{c',b\}$ are contained in three different bicliques
 respectively, intersecting $B$, a contradiction.
 We can conclude that either $a$ or $c$ has no other neighbors than $b$. If we suppose that $a$ does not have other neighbors, we obtain that
 $N(a)=\{b\}$.
  
To finish the proof, combining \textbf{Cases (A)} and \textbf{(B)}, we obtain that $N(a)=\{b\}$ and $N(b)=\{a,c\}$ as desired, that is, \textit{(2)} holds.
\end{proof}

Now we can give the proof of Theorem~\ref{teoGrado2}.
\vspace*{4mm}

\prooff{\ref{teoGrado2}} Let $G=KB(H)$ and let $q$ be a vertex of $G$ such that $d(q)=2$. We want to show that
$G-\{q\}$ is a biclique graph and in particular, that we can construct a graph $H'$ such that $G- \{q\} = KB(H')$.
We can assume that $H$ is false-twin-free as $G = KB(H) = KB(Tw(H))$ (\cite{marinayo}). We can also assume that $|V(H)| \geq 7$ since
it is easy to check the theorem if $H$ is false-twin-free and $|V(H)| \leq 6$ (there are only $61$ graphs with $6$ vertices, $11$ with $5$ and $3$ with $4$ as we generated them with NAUTY C library~\cite{nauty}. See Appendix 
of~\cite{marinaYoArxiv} for a list of all biclique graphs up to $6$ vertices). 

Now by Lemma~\ref{lemaLoco} we have that if $B$ is the biclique in $H$ corresponding to the vertex $q$ of $G$, then either $B=K_{1,1}$ and $H$ belongs to the first family of graphs shown in the statement of 
Lemma~\ref{lemaLoco} where $I$ is an independent set, or $B=K_{1,2}$ and $H$ belongs to the second family of graphs shown in the statement of Lemma~\ref{lemaLoco} where $N(a)=\{b\}$ and $N(b)=\{a,c\}$.

In the first case observe that we have the biclique $B$ that intersects only the following two bicliques: $B_1 = \{x\} \cup (N(x) - \{w\})$ and $B_2=\{x\} \cup (N(x) - \{v\})$.
Consider the graph $H'$ obtained as follows: take $H$, remove $v$ and $w$, add a copy of vertex $x$ (with the same neighborhood), say $y$, and finally add a vertex $x'$ adjacent to $x$ and a vertex 
$y'$ adjacent to $y$ (see Figure~\ref{lemalocografo}). Note that $H'$ has exactly one biclique less than $H$ (biclique $B$), moreover, we can associate bicliques $B_1,B_2$ of $H$ to the bicliques 
$B_1' = \{x\} \cup N(x), B_2'= \{y\} \cup N(y)$ of $H'$, respectively. Clearly, as $N(x) - \{x'\} = N(y) - \{y'\}$, $B_1'$ and $B_2'$ intersect, and since this set is equal to $N(x) - \{v,w\}$ 
of $H$, $B_1'$ and $B_2'$ have the same intersections in $H'$ than $B_1$ and $B_2$ of $H$ (minus the biclique $B$). We can conclude then, that $KB(H') = G- \{q\}$.  

 \begin{figure}[ht]
  \centering
  \includegraphics[scale=.4]{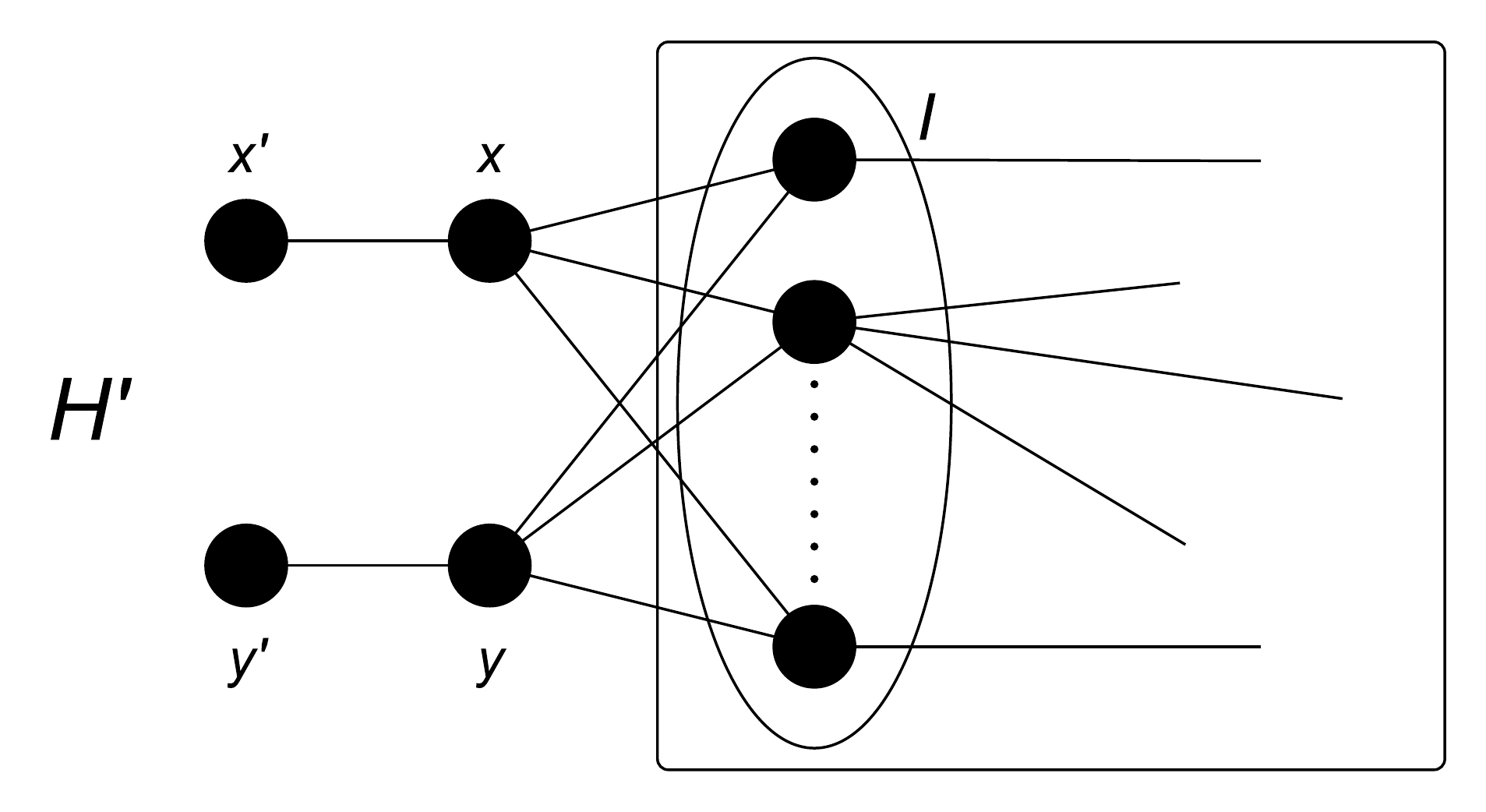}
  \caption{Graph $H'$ constructed from $H$ such that $KB(H') = G- \{q\}$.}
  \label{lemalocografo}
\end{figure}

Finally, if $B=K_{1,2}$ and $H$ is the second graph shown in the statement of Lemma~\ref{lemaLoco}, take $H' = H - \{a\}$. It is clear that the only biclique that is lost
in $H'$ is $B$, while the rest of the structure remains the same. Therefore $KB(H') = G- \{q\}$ where in this case, $H'$ is a subgraph of $H$, as we wanted to prove.  
\qed
\vspace*{0.5cm}

In Figure~\ref{grado2}, we can see an example of application of Theorem~\ref{teoGrado2} using its contrapositive. We want to check if $G$ is not a biclique graph, therefore
we remove one by one the vertices of degree two, $v_1,v_2$ (now with degree two), $v_3$ and $v_4$, and we obtain the graph $G'$ that we know it is not a biclique graph~\cite{marinaYoArxiv}. 
We can conclude then that $G$ and all intermediate graphs, are not biclique graphs as well. 
Note that if we did not know that $G'$ is not a biclique graph and we continued removing the vertices of degree two, we would have
obtained a $K_4$ that actually is a biclique graph therefore we could not conclude anything about $G$ and all intermediate graphs.
Also remark that in all these graphs every induced $P_3$ is contained in an induced \textit{diamond} 
or \textit{gem}, therefore we cannot apply the contrapositive of Theorem~\ref{tMarina}.

\begin{figure}[ht]
  \centering
  \includegraphics[scale=.3]{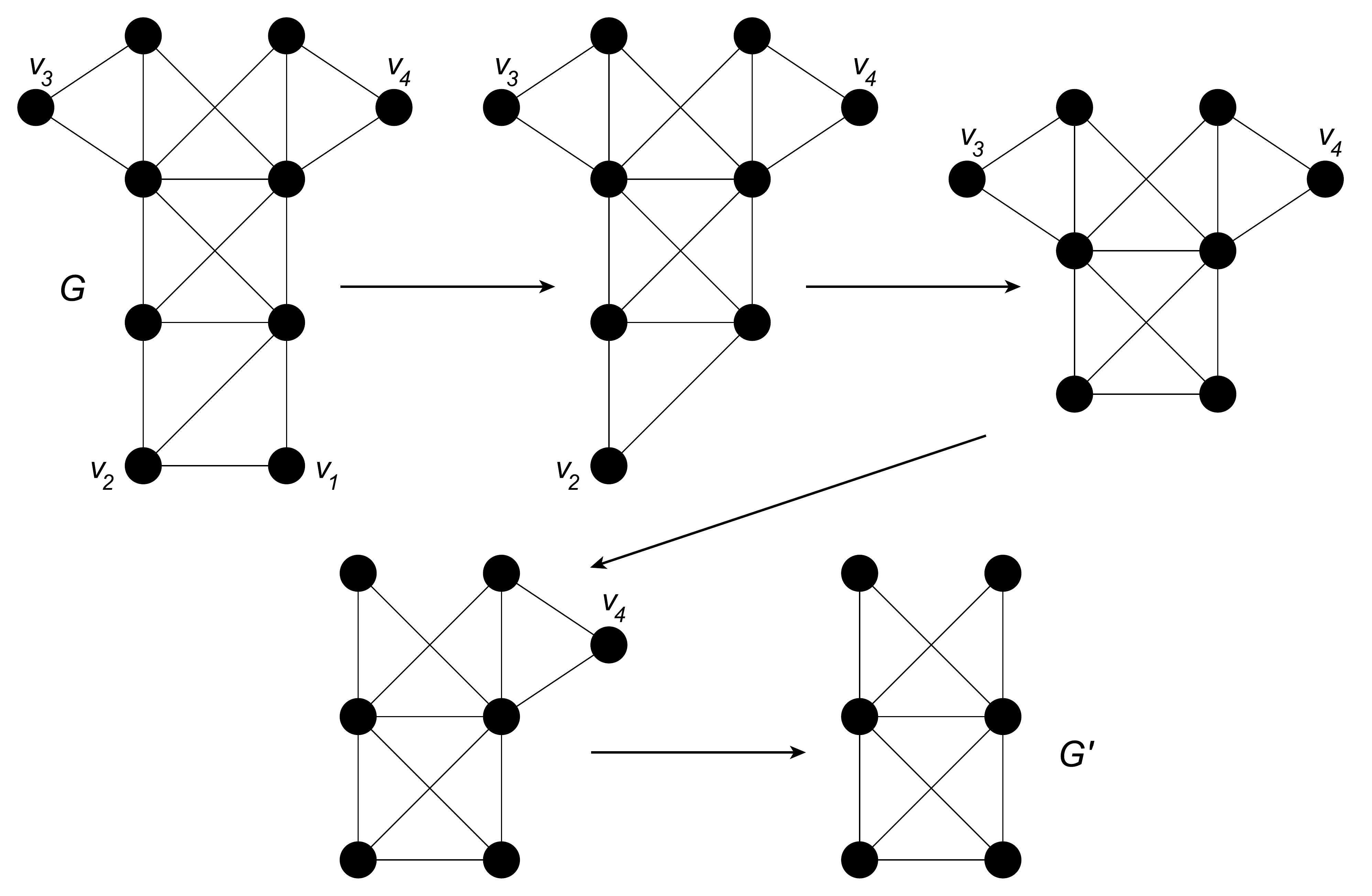}
  \caption{Graph $G$ and all intermediate graphs are not biclique graphs by Theorem~\ref{teoGrado2}.}
  \label{grado2}
\end{figure}

\section{Open problems}\label{conjs}

Our first conjecture states that the family of biclique graphs described in Observation~\ref{obsLean} is actually unique with respect to the property of not having any vertex
such that after its removal, the resulting graph is a biclique graph. 

\begin{conjecture}
Let $G$ be a biclique graph such that $G \neq KB(C_k)$, for $k\geq 7$. Then, there exists a vertex $q \in G$ such that $G-\{q\}$ is a biclique graph.
\end{conjecture}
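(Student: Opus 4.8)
Theorem~\ref{teoGrado2} already produces a removable vertex in every biclique graph that has a vertex of degree two, so the plan is to reduce to biclique graphs with $\delta(G)\ge 3$ and to attack that case through the local structure of a preimage $H$ together with the necessary condition of Theorem~\ref{tMarina}. Write $G=KB(H)$; as in the proof of Theorem~\ref{teoGrado2} we may take $H$ false-twin-free (using $KB(H)=KB(Tw(H))$, \cite{marinayo}), and, after checking finitely many small instances by computer in the spirit of Lemmas~\ref{lemmaCompu} and~\ref{lemaLoco}, we may assume $|V(H)|$ is large. Since $KB(C_k)$ is $4$-regular for $k\ge 7$, the hypothesis $G\neq KB(C_k)$ is used only once $\delta(G)\ge 3$.

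Call a vertex $q$ of $G$ \emph{rigid} if $G-\{q\}$ is not a biclique graph. Observation~\ref{obsLean} shows that all vertices of $KB(C_k)$ are rigid; the conjecture asserts this is the only way for every vertex to be rigid. I would combine two ingredients. The first is a \emph{certificate} step extending Lemma~\ref{lemaLoco}: show that if $q$ is rigid, with biclique $B=B_1\cup B_2$ in $H$, then $H$ is forced, in a neighbourhood of $B$, into one of a short list of configurations, using the same counting principle that drives Lemmas~\ref{lemmaCompu} and~\ref{lemaLoco} (each extra ``branch'' vertex near $B$ spawns an extra biclique meeting $B$); for each surviving configuration one then mimics the two cases in the proof of Theorem~\ref{teoGrado2} to build a modified graph $H'$ with $KB(H')=G-\{q\}$, contradicting rigidity, and one records the few genuinely rigid configurations. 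The second is an \emph{obstruction} step: use Theorem~\ref{tMarina} to first locate a vertex $q$ for which $G-\{q\}$ still has every induced $P_3$ inside a diamond or a gem, and show that no such $q$ exists precisely when $G$ is $4$-regular with the cyclic adjacency pattern of $KB(C_k)$. Glueing the forced local pictures around all bicliques of $H$ under the assumption that every vertex of $G$ is rigid should then force $H$ to be a long induced cycle, whence $G=KB(C_k)$.

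The hard part is twofold. First, when $d(q)=2$ the degree of $q$ caps the local complexity, but here the minimum degree is unbounded, so a blind enumeration of configurations is hopeless; a plausible remedy is to remove, instead of an arbitrary minimum-degree vertex, one corresponding to an \emph{extremal} biclique of $H$ --- say with $|B_1|=1$, or lying at the end of a longest induced path of $H$ --- together with a lemma guaranteeing that such a biclique exists unless $H$ is a cycle, so that its neighbourhood stays bounded regardless of $d(q)$. Second, and more fundamental, the implication ``$G-\{q\}$ satisfies Theorem~\ref{tMarina}'' $\Rightarrow$ ``$G-\{q\}$ is a biclique graph'' is false in general, so the construction of the witness $H'$ cannot be skipped; since no tractable characterization of biclique graphs with $\delta(G)\ge 3$ is known and recognition is suspected to be $NP$-complete, carrying out that construction uniformly over all configurations will likely demand either substantially new structural control over the preimage of a biclique graph or a delicate use of the characterization of~\cite{GroshausSzwarcfiterJGT2010}. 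That difficulty is presumably why the statement is posed as a conjecture.
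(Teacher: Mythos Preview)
The statement you are attempting is a \emph{conjecture} in the paper, listed in Section~\ref{conjs} among the open problems; the paper does not prove it and offers no proof sketch. There is therefore nothing to compare your proposal against, and the relevant question is simply whether your argument establishes the result.

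It does not, and to your credit you say so yourself. What you have written is a research programme, not a proof: the ``certificate'' step asks for an extension of Lemma~\ref{lemaLoco} to vertices of arbitrary degree, but you neither state the finite list of configurations nor prove that the list is finite; the ``obstruction'' step invokes Theorem~\ref{tMarina} while immediately conceding that its converse fails, so passing the $P_3$ test for $G-\{q\}$ yields nothing; and the ``glueing'' step that is supposed to force $H$ to be a long cycle is asserted without any mechanism. Your own final paragraph identifies precisely the two genuine obstacles --- unbounded local complexity once $\delta(G)\ge 3$, and the absence of a usable sufficient condition for being a biclique graph --- and correctly concludes that these are why the statement is a conjecture.

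One concrete point where the outline would already break: you propose to pick an ``extremal'' biclique of $H$, for instance one with $|B_1|=1$, and claim its neighbourhood stays bounded regardless of $d(q)$. But Lemma~\ref{lemaLoco} shows exactly the opposite phenomenon: even for $B=K_{1,1}$ or $B=K_{1,2}$, controlling $d(q)$ in $G$ is what bounds the neighbourhood in $H$, not the size of $B$ itself. Without the hypothesis $d(q)=2$ there is no a priori bound on how many bicliques meet $B$, hence no bound on the local picture in $H$, and the case analysis that drives Lemmas~\ref{lemmaCompu} and~\ref{lemaLoco} has no starting point. Until that is repaired, the proposal remains a plausible strategy rather than a proof.
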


Note that $G = KB(H) = KB(C_k)$ does not imply that $H=C_k$. Consider the graph $H$ obtained by taking a $C_k$, $k \geq 7$, joining each vertex of the cycle to (zero or more) different new vertices
and finally adding (zero or more) false-twin vertices to the vertices of the cycle. 
It easy to see that $KB(H) = KB(C_k)$ (recall that $KB(H) = KB(Tw(H))$ \cite{marinayo}). 
We propose therefore the following conjecture that characterizes the class of graphs having $KB(C_k)$ as biclique graphs.

\begin{conjecture}
Let $G=KB(H)$ be a biclique graph. Then $G = KB(C_k)$, for $k\geq 7$, if and only if $Tw(H)$ consists of an induced $C_k$ such that each vertex of the cycle is adjacent to at most
one vertex (outside the cycle) that has degree one.
\end{conjecture}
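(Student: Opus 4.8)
The plan is to prove the two implications by rather different methods: the converse implication is an elementary computation of the bicliques, while the forward implication is the substantial part, and is where I expect the work to concentrate.

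For the converse, note first that $C_k$ itself has exactly $k$ bicliques --- the $P_3$'s --- and two of them meet iff their centres are at cyclic distance at most $2$, so $KB(C_k)$ is the graph on $\{1,\dots,k\}$ in which $i\sim j$ iff $|i-j|\in\{1,2\}\pmod k$. Now assume $Tw(H)$ consists of an induced cycle $v_1,\dots,v_k$ with at most one pendant $p_i$ attached to each $v_i$; since $KB(H)=KB(Tw(H))$ we may take $H=Tw(H)$. Because $H$ is triangle-free and has no induced $C_4$ (attaching pendants creates neither), every induced complete bipartite subgraph of $H$ is a star, and maximality forces it to be $\{v_i\}\cup N(v_i)$ for some cycle vertex $v_i$ --- a $K_{1,2}$ or a $K_{1,3}$. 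Hence $H$ has exactly the $k$ bicliques $B_i=\{v_i\}\cup N(v_i)$. Since each $p_i$ lies only in $B_i$, we get $B_i\cap B_j\neq\emptyset$ iff $\{i-1,i,i+1\}\cap\{j-1,j,j+1\}\neq\emptyset$ iff $|i-j|\in\{1,2\}\pmod k$, so $KB(H)=KB(C_k)$.

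For the forward implication, assume $G=KB(H)=KB(C_k)$ with $k\ge 7$ and, using $KB(H)=KB(Tw(H))$, take $H=Tw(H)$ false-twin-free. Fix an isomorphism identifying $V(G)$ with $\{q_1,\dots,q_k\}$, $q_i\sim q_j$ iff $|i-j|\in\{1,2\}\pmod k$, and let $B_i$ be the biclique of $H$ corresponding to $q_i$; note $V(H)=\bigcup_i B_i$. The first step is to pin down each $B_i$: I claim it must be a $K_{1,2}$, or a $K_{1,3}$ whose third leaf is a degree-one vertex of $H$. The tool is a strengthening of Lemma~\ref{lemmaCompu}: whereas that lemma gives only $d(q)\ge 3$ when $K_{1,3}\subseteq B$ or $B=C_4$, here I want to conclude that if $B$ is ``too large'' (it contains $K_{1,4}$, or equals $C_4$, or is a $K_{1,3}$ whose extra leaf is not a pendant) or if $B=K_{1,1}$, then the neighbourhood $N_G(q)$ contains a triangle or an induced $K_{1,3}$. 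This is a contradiction, because for $k\ge 7$ the neighbourhood of every vertex of $KB(C_k)$ induces a $P_4$ (for $k\in\{5,6\}$ it induces $K_4$ or $C_4$ respectively --- which is exactly why the bound $k\ge 7$ enters), and a $P_4$ contains neither a triangle nor a $K_{1,3}$. With the $B_i$ so restricted, the second step is a reconstruction: the sets $w_i:=B_{i-1}\cap B_{i+1}$ are forced to be single vertices, they are pairwise distinct with $w_i\sim w_{i+1}$, and $w_1\cdots w_k$ is an induced $C_k$ in $H$; every remaining vertex of $H$ is the extra leaf of exactly one $B_i$, hence a pendant at $w_i$, and false-twin-freeness gives at most one pendant per $w_i$. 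This yields the claimed form of $Tw(H)=H$.

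I expect the main obstacle to be precisely the strengthening of Lemma~\ref{lemmaCompu}, together with the companion statement ruling out $K_{1,1}$ bicliques (the $d(q)=4$ counterpart of part (1) of Lemma~\ref{lemaLoco}): upgrading the degree bound and, more importantly, detecting a triangle or a $K_{1,3}$ inside $N_G(q)$ seems to require an induction on $|V(H)|$ in the same style as Lemma~\ref{lemmaCompu}, with many more cases and the familiar complication that false twins appear after deleting vertices. A secondary difficulty is making the reconstruction step watertight --- one must check that the labelling $i\mapsto B_i$ is consistent all the way around the cycle and that no two $B_i$ coincide; both should follow from the rigidity of $KB(C_k)$ for $k\ge 7$, but turning that rigidity into a clean combinatorial argument (and thereby also explaining the many non-isomorphic realisations that exist for $k\in\{5,6\}$, where $KB(C_k)$ is $K_5$ or the octahedron) is the part I would be least confident about.
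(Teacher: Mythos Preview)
The statement you are attempting is a \emph{conjecture} in the paper, not a theorem: the paper proves nothing about the ``only if'' direction, and for the ``if'' direction it merely remarks (just before the statement) that attaching pendants and false twins to $C_k$ does not change $KB$, calling this trivial. Your proof of the converse implication is a more careful write-up of exactly that observation (triangle-free and $C_4$-free forces every biclique to be a star centred at a cycle vertex), so on that half you agree with the paper.

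For the forward implication there is no paper proof to compare against; what you have is a plausible attack, honestly labelled as such. Two remarks on it. First, your enumeration of ``too large'' bicliques is incomplete: $K_{2,3}$ and $K_{3,3}$ contain $K_{1,3}$ but neither contain $K_{1,4}$ nor equal $C_4$, so they slip through your list; you want the hypothesis of the strengthened lemma to read something like ``$B$ properly contains $K_{1,3}$, or $C_4\subseteq B$, or $B=K_{1,3}$ with a non-pendant leaf, or $B=K_{1,1}$''. Second, and more seriously, the strengthened lemma itself --- that in all these cases $N_G(q)$ contains a triangle or an induced $K_{1,3}$ --- is the whole content of the conjecture, and you have not proved it; the induction-with-false-twin-repair scheme you propose is the natural thing to try, but Lemma~\ref{lemmaCompu} already needed a computer-verified base case and a lengthy case split just to get $d(q)\ge 3$, and your claim is strictly finer. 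The reconstruction step (showing $B_{i-1}\cap B_{i+1}$ is a single vertex and that these vertices thread into an induced $C_k$) also needs a real argument, since two stars can a priori overlap in more than one leaf. In short: your converse matches the paper, and your forward direction is a reasonable programme for an open problem rather than a proof.
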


The \textit{if} part is trivial as it was explained right before the statement. If the \textit{only if} part is true, given a graph $H$, one can verify if $KB(H)=KB(C_k)$ in linear time using the modular decomposition given in~\cite{habib} to obtain $Tw(H)$ (then it is easy to check if $Tw(H)$ is an induced $C_k$ with each vertex having at most one neighbor outside the cycle with 
degree one).

To finish, the last conjecture implies the first one when the biclique graphs have false-twin vertices. 

\begin{conjecture}
Let $G$ be a biclique graph and let $q$ be a false-twin vertex of $G$. Then, $G-\{q\}$ is a biclique graph. In particular, $Tw(G)$ is a biclique graph.
\end{conjecture}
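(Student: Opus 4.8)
The plan is to follow, at a higher level of generality, the strategy that works for Theorem~\ref{teoGrado2}: normalize $H$, extract a structural description of $H$ forced by the hypothesis, and then exhibit an explicit local modification of $H$ that destroys exactly one biclique and nothing else. Since $G=KB(H)=KB(Tw(H))$ (\cite{marinayo}), I would assume $H$ is false-twin-free, and that $|V(H)|$ exceeds some absolute constant, the finitely many small cases being dispatched by a computer search as in Lemmas~\ref{lemmaCompu} and~\ref{lemaLoco}. Let $q'$ be a vertex of $G$ with $N(q)=N(q')$; since false twins are non-adjacent, the bicliques $B,B'$ of $H$ corresponding to $q,q'$ are disjoint, and every biclique $C\notin\{B,B'\}$ of $H$ satisfies $C\cap B\neq\emptyset$ if and only if $C\cap B'\neq\emptyset$. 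The goal then becomes: construct a graph $H'$ whose bicliques are exactly those of $H$ other than $B$, with all pairwise intersections among the surviving bicliques unchanged; this gives $KB(H')=G-\{q\}$. The ``in particular'' part follows by induction, deleting the false-twin vertices of $G$ one at a time: after each deletion the remaining graph is a biclique graph and, if not yet false-twin-free, still has a false-twin vertex.

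\textbf{Structural step.} I would next prove a structural lemma describing the local configurations of $H$ around $B$ and $B'$ compatible with ``$B$ and $B'$ meet the same family of bicliques''. Following the template of Lemmas~\ref{lemmaCompu} and~\ref{lemaLoco}, the argument would split on the shape of $B$ (and of $B'$): the cases $B=K_{1,1}$ and $B=K_{1,2}$ should be tractable by a direct case analysis, while the cases $K_{1,3}\subseteq B$ or $C_4=B$ would be attacked by an induction that deletes a carefully chosen vertex of $A=V(H)-V(B)-V(B')$ for which $H-\{v\}$ stays connected, tracking the false twins that appear, passing to $Tw(H-\{v\})$, with a finite computer check for the bounded base case. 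The hoped-for conclusion is that either $B$ is one of a few small bicliques and $H$ lies in an explicitly listed finite family of configurations (in the spirit of Lemma~\ref{lemaLoco}), or else $B$ and $B'$ sit in $H$ in a sufficiently rigid, essentially symmetric, fashion.

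\textbf{Construction of $H'$.} For each configuration produced by the structural lemma I would give an explicit modification of $H$, in the spirit of the two constructions in the proof of Theorem~\ref{teoGrado2} (``split the apex $x$ into a twin pair and attach pendants'', and ``delete the pendant vertex $a$''): namely, add an edge inside a part of $B$ so that $B$ stops being complete bipartite; or add a vertex adjacent to exactly one part $B_1$ of $B$ so that $B$ is absorbed into a strictly larger induced complete bipartite subgraph and loses maximality; or delete a degree-one / twin vertex attached to $B$. In each case one checks that (i) $B$ is no longer a biclique of $H'$, (ii) every biclique of $H$ other than $B$ survives as a biclique of $H'$, (iii) $H'$ has no biclique $H$ did not have, and (iv) the intersection pattern among the bicliques of $H$ other than $B$ is preserved. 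Once the configuration is pinned down, (i)--(iv) should be routine.

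\textbf{Main obstacle.} The genuine difficulty — and the reason this is still a conjecture — is the structural step. The degree-two hypothesis of Theorem~\ref{teoGrado2} forces $B$ to be tiny ($K_{1,1}$ or $K_{1,2}$, by Lemma~\ref{lemaLoco}), whereas a false-twin vertex of $KB(H)$ can have arbitrarily large degree, so $B$ can a priori be an arbitrarily large complete bipartite graph and the pair $B,B'$ can be distributed across $H$ in many inequivalent ways; handling all of these uniformly is exactly what is missing. A subtler secondary point is that the modification of the construction step must not create any new biclique meeting a surviving one, which is a global rather than a local constraint, and this is where care with long induced cycles (cf.\ Observation~\ref{obsLean}) is needed. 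I would expect that a complete proof must either go through an explicit structural classification (analogous to the characterization of the $KB(C_k)$ family conjectured earlier in this section) or argue directly from the Groshaus--Szwarcfiter characterization of biclique graphs in~\cite{GroshausSzwarcfiterJGT2010}, showing that deleting a false-twin vertex preserves whatever certificate that characterization provides.
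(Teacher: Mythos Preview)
This statement is presented in the paper as an open \emph{conjecture}; the paper gives no proof. Your submission is, correspondingly, not a proof but a proof \emph{strategy}, and you are candid about this: the ``structural step'' is exactly the missing content, and you correctly identify that the absence of any a priori bound on the size or shape of $B$ (in contrast to the $d(q)=2$ hypothesis of Theorem~\ref{teoGrado2}, where Lemma~\ref{lemaLoco} forces $B\in\{K_{1,1},K_{1,2}\}$) is the essential obstruction. So there is no discrepancy to report between your approach and the paper's --- the paper has none --- and your outline is a faithful extrapolation of the method used for Theorem~\ref{teoGrado2} to the harder setting.

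Two small remarks on the outline itself. First, your derivation of the ``in particular'' clause by iterated deletion is sound. Second, your stated goal --- ``construct $H'$ whose bicliques are exactly those of $H$ other than $B$, with all pairwise intersections preserved'' --- is slightly stronger than necessary (you only need $KB(H')\cong G-\{q\}$), and in fact the paper's own construction for the case $B=K_{1,1}$ in Theorem~\ref{teoGrado2} does \emph{not} literally preserve the surviving bicliques $B_1,B_2$ of $H$ but replaces them by new bicliques $B_1',B_2'$ of $H'$ with the same intersection pattern; so you should aim for a bijection of biclique families respecting intersections, not literal equality. Your observation that $B,B'$ are disjoint and meet the same family of bicliques is correct and a natural starting point, but --- as you acknowledge --- it does not yet constrain $H$ enough to carry out the construction.
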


\bibliography{biblio}

\end{document}